\theoremstyle{plain}
\newtheorem{claim}{Claim}
\theoremstyle{definition}
\newtheorem{reduction}{Reduction Rule}
\newtheorem{branching}{Branching Rule}
\newcommand{\YES}{\textsc{Yes}\xspace}
\newcommand{\NO}{\textsc{No}\xspace}
\newcommand{\Oh}{\ensuremath{\mathcal{O}}\xspace}
\newcommand{\OhStar}{\ensuremath{\mathcal{O}^{\star}}\xspace}
\newcommand{\FPT}{\textrm{\textup{FPT}}\xspace}
\newcommand{\NPC}{\textrm{\textup{NP-complete}}\xspace}
\newcommand{\name}[1]{\textsc{#1}}
\newcommand{\VC}{\name{Vertex Cover}\xspace}
\newcommand{\AGVC}{\name{Above-Guarantee Vertex Cover}\xspace}
\newcommand{\agvc}{\name{AGVC}\xspace}
\newcommand{\VCAL}{\name{Vertex Cover Above LP}\xspace}
\newcommand{\vcal}{\name{VCAL}\xspace}
\newcommand{\VCALP}{\name{Vertex Cover Above Lov\'{a}sz-Plummer}\xspace}
\newcommand{\vcalp}{\name{VCAL-P}\xspace}
\newcommand{\surplus}[1]{\ensuremath{\mathbf{surplus}({#1})}}
\newcommand{\defparproblem}[4]{
  \vspace{3mm}
\noindent\fbox{
  \begin{minipage}{.97\textwidth}
  \begin{tabular*}{\textwidth}{@{\extracolsep{\fill}}lr} \textsc{#1}  & {\bf{Parameter:}} #3 \\ \end{tabular*}
  {\bf{Input:}} #2  \\
  {\bf{Question:}} #4
  \end{minipage}
  }
  \vspace{2mm}
}
\theoremstyle{plain} % Theorems, lemmas, propositions
\newtheorem{theorem}{Theorem}
\newtheorem{corollary}{Corollary}
\newtheorem{lemma}{Lemma}
\theoremstyle{remark} % Remarks, notes
\theoremstyle{definition} % For definitions
\newtheorem{definition}{Definition}
\begin{document}
\title{Raising The Bar For \VC: Fixed-parameter Tractability Above
  A Higher Guarantee}
%\subtitle{Faster Fixed-parameter Algorithms Above A Stronger Guarantee}
\titlerunning{\VC parameterized above a higher guarantee}

%%%%%%%%%%%%%%%%% ACHTUNG. Comment beginnt
\iffalse
\author[1]{Shivam Garg}
\author[2]{Geevarghese Philip} 
\authorrunning{S.Garg and G. Philip}

\affil[1]{IIT Bombay, Mumbai, India\\ 
  \texttt{shivamgarg@iitb.ac.in}}
\affil[2]{Max Planck Institute for Informatics, Saarbr\"{u}cken,
  Germany\\ 
  \texttt{gphilip@mpi-inf.mpg.de}}

\fi
%%%%%%%%%%%%%%%%% Endo of ACHTUNG

\author{%
  Shivam Garg\inst{1} \and Geevarghese Philip\inst{2}}

\authorrunning{S.Garg and G. Philip}

\institute{%
  IIT Bombay, Mumbai, India, \email{shivamgarg@iitb.ac.in} \and
  Max Planck Institute for Informatics, Saarbr\"{u}cken, Germany,
  \email{gphilip@mpi-inf.mpg.de}}

% Autoref names. The renewcommands don't take effect if they are
% placed before this point, for some weird reason. So placing all
% of them together, here.
  
%\newcommand{\sectionautorefname}{Section}
%\newcommand{\subsectionautorefname}{Section}
\newcommand{\observationautorefname}{Observation}
\newcommand{\propositionautorefname}{Proposition}
\newcommand{\reductionautorefname}{Reduction Rule}
\newcommand{\branchingautorefname}{Branching Rule}
\newcommand{\claimautorefname}{Claim}
 
\makeatletter
\@ifundefined{algorithmautorefname}{% if
\newcommand{\algorithmautorefname}{Algorithm}}{%else 
\renewcommand{\algorithmautorefname}{Algorithm}}
\makeatother

% So that the first page does not have a page number, just like
% the other pages.
%\makeatletter
%  \let\ps@plain\ps@empty
%\makeatother

%\clearpage
\pagenumbering{gobble} % Remove page numbers (and reset to 1).
\clearpage
\thispagestyle{empty}
\maketitle
\begin{abstract}
  %%%%% ACHTUNG! Comment beginnt
  \iffalse
  We investigate the following above-guarantee parameterization of
  the classical \VC problem: Given a graph \(G\) and
  \(\hat{k}\in\mathbb{N}\) as input, does \(G\) have a vertex
  cover of size at most \((2LP-MM)+\hat{k}\)? Here \(MM\) is the
  size of a maximum matching of \(G\), \(LP\) is the value of an
  optimum solution to the relaxed (standard) LP for \VC on \(G\),
  and \(\hat{k}\) is the parameter. Since
  \((2LP-MM)\geq{LP}\geq{MM}\), this is a stricter
  parameterization than those---namely, above-\(MM\), and
  above-\(LP\)---which have been studied so far.
  
  We prove that \VC is fixed-parameter tractable for this stricter
  parameter \(\hat{k}\): We derive an algorithm which solves \VC
  in time \(\OhStar(3^{\hat{k}})\), pushing the envelope further
  on the parameterized tractability of \VC.
  \fi
  %%%%% ACHTUNG! Comment endet

  The standard parameterization of the \VC problem (Given an
  undirected graph \(G\) and \(k\in\mathbb{N}\) as input, does
  \(G\) have a vertex cover of size at most \(k\)?) has the
  solution size \(k\) as the parameter.  The following more
  challenging parameterization of \VC stems from the observation
  that the size \(MM\) of a \emph{maximum matching} of \(G\)
  lower-bounds the size of any vertex cover of \(G\): Does \(G\)
  have a vertex cover of size at most \(MM+k_{\mu}\)? The
  parameter is the excess \(k_{\mu}\) of the solution size over
  the lower bound \(MM\). 

  Razgon and O'Sullivan (ICALP 2008) showed that this
  \emph{above-guarantee} parameterization of \VC is
  fixed-parameter tractable and can be solved\footnote{The \OhStar
    notation hides polynomial factors.} in time
  \(\OhStar(15^{k_{\mu}})\). This was first improved to
  \(\OhStar(9^{k_{\mu}})\) (Raman et al., ESA 2011), then to
  \(\OhStar(4^{k_{\mu}})\) (Cygan et al., IPEC 2011, TOCT 2013),
  then to \(\OhStar(2.618^{k_{\mu}})\) (Narayanaswamy et al.,
  STACS 2012) and finally to the current best bound
  \(\OhStar(2.3146^{k_{\mu}})\) (Lokshtanov et al., TALG
  2014). The last two bounds were in fact proven for a different
  parameter: namely, the excess \(k_{\lambda}\) of the solution
  size over \(LP\), the value of the \emph{linear programming
    relaxation} of the standard LP formulation of \VC. Since
  \(LP\geq{}MM\) for any graph, we have that
  \(k_{\lambda}\leq{}k_{\mu}\) for \YES instances. This is thus a
  \emph{stricter} parameterization---the new parameter is, in
  general, smaller---and the running times carry over directly to
  the parameter \(k_{\mu}\).
  
  We investigate an even stricter parameterization of \VC, namely
  the excess \(\hat{k}\) of the solution size over the quantity
  \((2LP-MM)\). We ask: Given a graph \(G\) and
  \(\hat{k}\in\mathbb{N}\) as input, does \(G\) have a vertex
  cover of size at most \((2LP-MM)+\hat{k}\)? The parameter is
  \(\hat{k}\). It can be shown % (following Lov\'{a}sz and Plummer,
  % 1986)
  that \((2LP-MM)\) is a lower bound on vertex cover size, and
  since \(LP\geq{}MM\) we have that \((2LP-MM)\geq{}LP\), and
  hence that \(\hat{k}\leq{}k_{\lambda}\) holds for \YES
  instances. Further, \((k_{\lambda}-\hat{k})\) could be as large
  as \((LP-MM)\) and---to the best of our knowledge---this
  difference cannot be expressed as a function of \(k_{\lambda}\)
  alone. These facts motivate and justify our choice of parameter:
  this is indeed a stricter parameterization whose tractability
  does not follow directly from known results.
  
  We show that \VC is fixed-parameter tractable for this stricter
  parameter \(\hat{k}\): We derive an algorithm which solves \VC
  in time \(\OhStar(3^{\hat{k}})\), thus pushing the envelope
  further on the parameterized tractability of \VC.
\end{abstract}
\clearpage
\pagenumbering{arabic} % Restart page numbering at 1.
\section{Introduction}\label{sec:introduction}
The input to the \VC problem consists of an undirected graph \(G\)
and an integer \(k\), and the question is whether \(G\) has a
\emph{vertex cover}---a subset \(S\subseteq{}V(G)\) of vertices
such that every edge in \(G\) has at least one end-point in
\(S\)---of size at most \(k\). This problem is among Karp's
original list of 21 \NPC problems~\cite{Karp1972}; it is also one
of the best-studied problems in the field of parameterized
algorithms and
complexity~\cite{ChenKanjXia2010,Lampis2011,MishraRamanSaurabhSikdarSubramanian2011,RamanRamanujanSaurabh2011}.

The input to a parameterized version of a classical decision
problem consists of two parts: the classical input and a specified
\emph{parameter}, usually an integer, usually denoted by the
letter \(k\). A fixed-parameter tractable (\FPT)
algorithm~\cite{DowneyFellows2013,FlumGroheBook} for this
parameterized problem is one which solves the underlying decision
problem in time \(f(k)\cdot{}n^{c}\) where (i) \(f\) is a
computable function of \(k\) alone, (ii) \(n\) is the size of the
(classical) input instance, and (iii) \(c\) is a constant
independent of \(n\) and \(k\). This running time is often written
as \(\OhStar(f(k))\); the \OhStar notation hides constant-degree
polynomial factors in the running time. A parameterized algorithm
which has an \FPT algorithm is itself said to be (in) \FPT.

The ``standard'' parameter for \VC is the number \(k\) which comes
as part of the input and represents the \emph{size of the vertex
  cover} for which we look---hence referred to, loosely, as the
``solution size''. This is the most extensively studied
parameterization of
\VC~\cite{BalasubramanianFellowsRaman1998,ChandranGrandoni2005,ChenKanjXia2010,Lampis2011,NiedermeierRossmanith1999}. Starting
with a simple two-way branching algorithm (folklore) which solves
the problem in time \(\OhStar(2^{k})\) and serves as the \emph{de
  facto} introduction-cum-elevator-pitch to the field, a number of
\FPT algorithms with improved running times have been found for
\VC, the current fastest of which solves the problem in
\(\OhStar(1.2738^{k})\) time~\cite{ChenKanjXia2010}. It is also
known that unless the Exponential Time Hypothesis (ETH) fails,
there is no algorithm which solves \VC in \(\OhStar(2^{o(k)})\)
time~\cite{ImpagliazzoPaturiZane2001}.

This last point hints at a fundamental drawback of this
parameterization of \VC: In cases where the size of a smallest
vertex cover (called the \emph{vertex cover number}) of the input
graph is ``large''---say, \(\Omega(n)\) where \(n\) is the number
of vertices in the graph---one cannot hope to use these algorithms
to find a smallest vertex cover ``fast''. This is, for instance,
the case when the input graph \(G\) has a large \emph{matching},
which is a set of edges of \(G\) no two of which share an
end-point. Observe that since each edge in a matching has a
distinct representative in any vertex cover, the size of a largest
matching in \(G\) is a \emph{lower bound} on the vertex cover
number of \(G\). So when input graphs have matchings of size
\(\Omega(n)\)---which is, in a real sense, \emph{the most common
  case} by far (see, e.g.;~\cite[Theorem
7.14]{Bollobas2001})---\FPT algorithms of the kind described in
the previous paragraph take \(\Omega(c^{n})\) time for some
constant \(c\), and one cannot hope to improve this to the form
\(\Oh(c^{o(n)})\) unless ETH fails. Put differently: consider the
standard parameterization of \VC, and let \(MM\) denote the size
of a maximum matching---the \emph{matching number}---of the input
graph. Note that we can find \(MM\) in polynomial
time~\cite{Edmonds1965}. When \(k<MM\) the answer is trivially
\NO, and thus such instances are uninteresting, and when
\(k\geq{}MM\), \FPT algorithms for this parameterization of the
problem are impractical for those (most common) instances for
which \(MM=\Omega(n)\).

Such considerations led to the following alternative
parameterization of \VC, where the parameter is the ``excess''
above the matching number:

\defparproblem{\AGVC(\agvc)}%
{A graph \(G\) %, a maximum matching \(M\) of \(G\) of size
               %\(MM\),
  and \(k_{\mu}\in{}\mathbb{N}\).}%
{\(k_{\mu}\)}%
{Does \(G\) have a vertex cover of size at most \(MM+k_{\mu}\)?}

The parameterized complexity of \agvc was % open for a long time
% till it was
settled by Razgon and O'Sullivan~\cite{RazgonOSullivan2008} in
2008; they showed that the problem is \FPT and can be solved in
\(\OhStar(15^{k_{\mu}})\) time. A sequence of faster \FPT
algorithms followed: In 2011, Raman et
al.~\cite{RamanRamanujanSaurabh2011} improved the running time to
\(\OhStar(9^{k_{\mu}})\), and then Cygan et
al.~\cite{CyganPilipczukPilipczukWojtaszczyk2011,CyganPilipczukPilipczukWojtaszczyk2013}
improved it further to \(\OhStar(4^{k_{\mu}})\). In 2012,
Narayanaswamy et al.~\cite{NarayanaswamyRamanRamanujanSaurabh2012}
developed a faster algorithm which solved \agvc in time
\(\OhStar(2.618^{k_{\mu}})\).  Lokshtanov et
al.~\cite{LokshtanovNarayanaswamyRamanRamanujanSaurabh2014}
improved on this to obtain an algorithm with a running time of
\(\OhStar(2.3146^{k_{\mu}})\).  This is currently the fastest \FPT
algorithm for \AGVC.

The algorithms of Narayanaswamy et al. and Lokshtanov et al. in
fact solve a ``stricter'' parameterization of \VC. Let \(LP\)
denote the minimum value of a solution to the \emph{linear
  programming relaxation} of the standard LP formulation of \VC
(See \autoref{sec:preliminaries} for definitions.). % It follows
% directly from definitions that
Then \(LP\) is a lower bound on the vertex cover number of the
graph. Narayanaswamy et al. introduced the following
parameterization\footnote{To be precise, Narayanaswamy et al. used
  the value \(\lceil{}LP\rceil\) instead of just \(LP\) in their
  definition of \VCAL, but this makes no essential difference.}
of \VC, ``above'' \(LP\):

\defparproblem{\VCAL(\vcal)}%
{A graph \(G\) %, the minimum value \(LP\) of a solution to the LP
  % relaxation of the standard LP formulation of \VC for \(G\),
  and
  \(k_{\lambda}\in{}\mathbb{N}\).}%
{\(k_{\lambda}\)}%
{Does \(G\) have a vertex cover of size at most
  \(LP+k_{\lambda}\)?}

The two algorithms solve \vcal in times
\(\OhStar(2.618^{k_{\lambda}})\) and
\(\OhStar(2.3146^{k_{\lambda}})\), respectively. Since the
inequality \(LP\geq{}MM\) holds for every graph we get that \vcal
is a \emph{stricter} parameterization of \VC, in the sense that
these algorithms for \vcal directly imply algorithms which solve
\agvc in times \(\OhStar(2.618^{k_{\mu}})\) and
\(\OhStar(2.3146^{k_{\mu}})\), respectively. To see this, consider
an instance \((G,k_{\mu})\) of \agvc where \(MM\) is the matching
number and \(VC_{opt}\) is the (unknown) vertex cover number of
the input graph \(G\), and let \(x=MM+k_{\mu}\). The question is
then whether \(VC_{opt}\leq{}x\). To resolve this, find the value
\(LP\) for the graph \(G\) (in polynomial time) and set
\(k_{\lambda}=x-LP\). Now we have that
\(VC_{opt}\leq{}x\iff{}VC_{opt}\leq{}LP+k_{\lambda}\), and we can
check if the latter inequality holds---that is, we can solve
\vcal---in \(\OhStar(2.3146^{k_{\lambda}})\) time using the
algorithm of Lokshtanov et al. Now
\(LP\geq{}MM\implies(x-LP)\leq(x-MM)\implies{k_{\lambda}}\leq{}k_{\mu}\),
and so this algorithm runs in \(\OhStar(2.618^{k_{\mu}})\) time as
well.

This leads us naturally to the next question: can we push this
further? Is there an even stricter lower bound for \VC, such that
\VC is still fixed-parameter tractable when parameterized above
this bound? To start with, it is not clear that a stricter lower
bound even exists for \VC: what could such a bound possibly look
like? It turns out that we can indeed derive such a bound; we are
then left with the task of resolving tractability above this
stricter bound.

\medskip\noindent\textbf{Our Problem.} Motivated by an observation
of Lov\'{a}sz and Plummer~\cite{LovaszPlummerBook2009} we
show---see \autoref{lem:lower_bound}---that the quantity
\((2LP-MM)\) is a lower bound on the vertex cover number of a
graph, and since \(LP\geq{}MM\) we get that
\((2LP-MM)\geq{}LP\). This motivates the following
parameterization of \VC:

\defparproblem{\VCALP(\vcalp)}%
{A graph \(G\) %, the minimum value \(LP\) of a solution to the LP
  % relaxation of the standard LP formulation of \VC for \(G\), the
  % size \(MM\) of a maximum matching of \(G\),
  and \(\hat{k}\in{}\mathbb{N}\).}%
{\(\hat{k}\)}%
{Does \(G\) have a vertex cover of size at most
  \((2LP-MM)+\hat{k}\)?}

Since \((2LP-MM)\geq{}LP\) we get, following similar arguments as
described above, that \VCALP is a stricter parameterization than
\VCAL. Further, \((k_{\lambda}-\hat{k})\) could be as large as
\((LP-MM)\) and---to the best of our knowledge---this difference
cannot be expressed as a function of \(k_{\lambda}\) alone for the
purpose of solving \VC. These facts justify our choice of
parameter: \VCALP is indeed a stricter parameterization than both
\AGVC and \VCAL, and its tractability does not follow directly
from known results.

\noindent\textbf{Our Results.} The main result of this work is that \VC is 
fixed-parameter tractable even when parameterized above this
stricter lower bound:
\begin{theorem}\label{thm:main}
  \VCALP is fixed-parameter tractable and can be solved in
  \(\OhStar(3^{\hat{k}})\) time.
\end{theorem}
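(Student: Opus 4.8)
The plan is to combine the Nemhauser--Trotter (NT) reduction with a branching whose progress measure is governed by the gap between the matching number and the independence number of the \emph{LP-reduced} graph.

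\emph{Stage 1: normalise the instance.} If $\hat k<0$ answer \NO; this is correct by \autoref{lem:lower_bound}. Otherwise compute an optimal half-integral solution of the standard VC LP and the induced partition $(V_0,V_1,V_{1/2})$. Besides the usual persistence identities $VC(G)=|V_1|+VC(G[V_{1/2}])$ and $LP(G)=|V_1|+LP(G[V_{1/2}])$, the ingredient I need is the matching identity $MM(G)=|V_1|+MM(G[V_{1/2}])$. Its ``$\le$'' direction is a short counting argument using that there are no $V_0$--$V_{1/2}$ edges and $V_0$ is independent; its ``$\ge$'' direction needs a matching in $G$ that saturates $V_1$ into $V_0$, which exists because otherwise a Hall-violating $S\subseteq V_1$ would let one push $S$ down to $\tfrac12$ and $N_G(S)\cap V_0$ up to $\tfrac12$, strictly beating $LP$. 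Substituting the three identities into the budget $(2LP-MM)+\hat k$ shows $(G,\hat k)$ is a \YES-instance of \vcalp iff $(G[V_{1/2}],\hat k)$ is, with the \emph{same} parameter; iterating, I may assume the all-$\tfrac12$ assignment is the unique optimal LP solution, equivalently $\surplus{I}\ge 1$ for every nonempty independent set $I$. In this normalised form $LP=n/2$, the target is $n-MM+\hat k$, and \autoref{lem:lower_bound} gives $\alpha(G)\le MM$, so the question is whether $MM(G)-\alpha(G)\le\hat k$.

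\emph{Stage 2: base case.} If $G$ has a perfect matching then $MM=n/2$, the target equals $LP+\hat k$, and the instance is literally an instance of \VCAL with parameter $\hat k$; solve it with the $\OhStar(2.3146^{\hat k})$ algorithm of Lokshtanov et al., which is comfortably within $\OhStar(3^{\hat k})$. I will arrange that every branch eventually falls into this case (or onto the empty graph).

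\emph{Stage 3: reduction rules and branching.} Exhaustively apply polynomial-time simplifications that do not increase $\hat k$ — re-running Stage 1 after each change, folding degree-$2$ vertices, and, crucially, rules taming the Gallai--Edmonds structure of $G$: disposing of $M$-unsaturated vertices of small degree, and resolving small factor-critical components of $G[D]$ directly (a triangle component is ``K\"onig-like'' and contributes $0$ to the gap, a larger clique contributes a predictable amount, etc.). Once no rule applies, branch on a carefully chosen object — a vertex $v$ outside the Gallai--Edmonds set $D$, so that $MM(G-v)=MM-1$, or, when $D=V(G)$, an edge inside a large factor-critical component — over the $O(1)$ ways of placing $N[v]$ (resp.\ the edge's two endpoints) relative to the vertex cover. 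After each branch re-run Stage 1; using the identities of Stage 1 one computes that deleting a set $X$ and charging the used cover vertices to the budget changes the parameter to $\hat k - \bigl(2LP(\text{child})-MM(\text{child})\bigr) + (\text{budget adjustment})$, which I will show simplifies — after the reduction rules have fired — to $\hat k$ minus a quantity that is at least $1$ in every branch, because each branch either forces a cover vertex beyond the LP quota or strictly lowers $MM$. A $\le 3$-way branch with every child dropping $\hat k$ by at least $1$ gives $T(\hat k)\le 3\,T(\hat k-1)+n^{O(1)}$, hence $\OhStar(3^{\hat k})$. Correctness of each rule and each branch is argued from the structural facts of Stages 1 and 3; termination holds because $\hat k\ge 0$ strictly decreases on branch edges and $G$ strictly shrinks on reduction edges.

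\emph{Where the difficulty lies.} The delicate part is Stage 3. Putting an $M$-unsaturated (Gallai--Edmonds $D$-)vertex into the cover is $\hat k$-neutral, and putting its neighbourhood into the cover can even \emph{raise} $(2LP-MM)$ relative to the budget spent, so a naive branch need not make progress. Obtaining the $\OhStar(3^{\hat k})$ bound requires (i) enough reduction rules that the graph is ``dense'' around its $M$-unsaturated part and free of cheap factor-critical components, (ii) a choice of branching object and of the $\le 3$ cases that forces, in each case, an extra cover vertex beyond $LP$ or a drop in $MM$, and (iii) a careful re-derivation of $LP$ and $MM$ for the (NT-normalised) children so as to read off the net change in $\hat k=k_\lambda-(LP-MM)$.
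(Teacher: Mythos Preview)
Your Stages 1 and 2 are sound and your diagnosis of where the difficulty lies is accurate, but Stage 3 is a plan, not a proof. You have not supplied the actual reduction rules, the actual branching rule for the hard case, or the structural lemma that makes it work --- and these are precisely the content of the theorem.

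Concretely, two things are missing. First, the NT reduction of your Stage 1 only guarantees $\surplus{G}\ge 1$, but any 3-way branch of the kind you envisage must delete \emph{two} vertices in at least one branch, and you need $LP$ to drop by exactly $1$ there. For that you need $\surplus{G}\ge 2$, which requires further reduction rules beyond NT (the paper uses two additional rules that dispose of, or contract, independent sets of surplus exactly $1$; showing these are $\hat k$-safe --- i.e.\ that $MM$ drops enough to compensate --- is nontrivial and takes up most of \autoref{sec:redrules}). Second, your branching proposal for the case $D=V(G)$ --- ``branch on an edge inside a large factor-critical component'' --- does not work as stated: if $\{u,v\}\subseteq O$ and you delete $u$, then $MM$ does \emph{not} drop (that is the defining property of $O$), so $\hat k$ is unchanged in that branch. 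The paper's fix is the real idea: pick $u\in O$ with two $O$-neighbours $v,w$; the branch ``take $\{v,w\}$'' drops $MM$ because $v,w$ lie in the same factor-critical component of $G[O]$; the branch ``take $u$'' is $\hat k$-neutral on its own, but the paper proves (\autoref{lem:egbipart}) that in $G\setminus u$ the \emph{new} Gallai--Edmonds set $P'$ necessarily contains an edge, and a secondary 2-way branch on that edge then gives the required drop. This structural fact --- that removing an $O$-vertex with an $O$-neighbour forces the post-deletion $P'$ to be non-edgeless --- is the heart of the argument, and nothing in your outline supplies it or an alternative.

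Your Stage 2 base case (perfect matching $\Rightarrow$ hand off to \vcal) is a legitimate shortcut the paper does not take, but since you never establish that the branching funnels every path to a perfect-matching instance, it does no work here.
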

By the discussions above, this directly implies similar \FPT
algorithms for the two weaker parameterizations:
\begin{corollary}\label{cor:main}
  \VCAL can be solved in \(\OhStar(3^{k_{\lambda}})\) time, and
  \AGVC can be solved in \(\OhStar(3^{k_{\mu}})\) time.
\end{corollary}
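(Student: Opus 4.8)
The plan is to prove \autoref{cor:main} by reducing each of \vcal and \agvc to \vcalp and then invoking \autoref{thm:main}. Two ingredients are needed. First, both \(LP\) --- the optimum value of the standard LP relaxation of \VC --- and \(MM\) --- the matching number --- are computable in polynomial time: \(LP\) because linear programs are solvable in polynomial time (or, alternatively, via half-integrality of the relaxation together with a bipartite matching / max-flow computation), and \(MM\) by Edmonds' algorithm~\cite{Edmonds1965}. Second, \((2LP-MM)\) is a genuine lower bound on the vertex cover number, which is exactly \autoref{lem:lower_bound}; this is what lets us discharge degenerate instances.

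I would first handle \vcal. Given an instance \((G,k_{\lambda})\), write \(VC_{opt}\) for the (unknown) vertex cover number of \(G\); the question is whether \(VC_{opt}\le LP+k_{\lambda}\). Compute \(LP\) and \(MM\) for \(G\) in polynomial time and set \(\hat{k} := (LP+k_{\lambda})-(2LP-MM) = k_{\lambda}-(LP-MM)\). If \(\hat{k}<0\) then \(LP+k_{\lambda}<2LP-MM\le VC_{opt}\) by \autoref{lem:lower_bound}, so the instance is a \NO-instance and we answer \NO. Otherwise \(\hat{k}\ge 0\) and, by construction, \(VC_{opt}\le LP+k_{\lambda}\) if and only if \(VC_{opt}\le (2LP-MM)+\hat{k}\); thus \((G,\hat{k})\) is an equivalent instance of \vcalp. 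Running the algorithm of \autoref{thm:main} on \((G,\hat{k})\) decides it in \(\OhStar(3^{\hat{k}})\) time, and since \(LP\ge MM\) for every graph we have \(\hat{k}\le k_{\lambda}\), so the overall running time is \(\OhStar(3^{k_{\lambda}})\).

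The argument for \agvc is the same up to the arithmetic. Given \((G,k_{\mu})\), the question is whether \(VC_{opt}\le MM+k_{\mu}\); I would set \(\hat{k} := (MM+k_{\mu})-(2LP-MM) = k_{\mu}-2(LP-MM)\). As before, if \(\hat{k}<0\) we answer \NO using \autoref{lem:lower_bound}, and otherwise \((G,\hat{k})\) is an equivalent \vcalp instance that \autoref{thm:main} solves in \(\OhStar(3^{\hat{k}})\) time; since \(LP\ge MM\) gives \(\hat{k}\le k_{\mu}\), this is \(\OhStar(3^{k_{\mu}})\) time. Equivalently, one could first reduce \agvc to \vcal exactly as in the discussion preceding the corollary statement and then apply the \vcal case, noting that the chain \(2LP-MM\ge LP\ge MM\) keeps the parameter from increasing at either step.

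There is, strictly speaking, no hard step here --- the real content of the corollary lies in \autoref{thm:main}, and the reductions are linear-arithmetic bookkeeping. The only point that needs a moment's care is the degenerate case \(\hat{k}<0\), where we must appeal to \autoref{lem:lower_bound} to reject the instance as a trivial \NO rather than feed a negative parameter to the algorithm; and I would note that for every \YES-instance one automatically gets \(\hat{k}\ge 0\), since \(VC_{opt}\le LP+k_{\lambda}\) (resp. \(\le MM+k_{\mu}\)) together with \(VC_{opt}\ge 2LP-MM\) forces \(k_{\lambda}\ge LP-MM\) (resp. \(k_{\mu}\ge 2(LP-MM)\)).
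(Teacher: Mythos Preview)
Your proposal is correct and follows essentially the same approach as the paper, which derives the corollary from \autoref{thm:main} via the parameter-shifting reductions sketched in the introduction (compute \(LP\) and \(MM\), set \(\hat{k}\) so that the target vertex-cover sizes coincide, and use \(LP\ge MM\) to bound \(\hat{k}\le k_{\lambda}\le k_{\mu}\)). If anything, you are slightly more careful than the paper in explicitly handling the degenerate case \(\hat{k}<0\) via \autoref{lem:lower_bound}, which the paper leaves implicit.
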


\medskip\noindent\textbf{Our Methods.} We now sketch the main
ideas behind our \FPT algorithm for \VCALP; the details are in
\autoref{sec:algorithm}. Let \(k=(2LP-MM)+\hat{k}\) denote the
``budget'', which is the maximum size of a vertex cover in whose
existence we are interested; we want to find if there is a vertex
cover of size at most \(k\).  At its core, our algorithm is a
simple branching algorithm which (i) manages to drop the
\emph{measure} \(\hat{k}\) by at least \(1\) on each branch, and
(ii) has a worst-case branching factor of \(3\). To achieve this
kind of branching we need to find, in polynomial time,
constant-sized structures in the graph, on which we can branch
profitably. It turns out that none of the ``local'' branching
strategies traditionally used to attack \VC---from the simplest
``pick an edge and branch on its end-points'' to more involved
ones based on complicated structures (e.g., those of Chen et
al.~\cite{ChenKanjXia2010})---serve our purpose. All of these
branching strategies give us a drop in \(k\) in each
branch---because we pick one or more vertices into the
solution---but give us \emph{no control} over how \(LP\) and
\(MM\) change.
 
So we turn to the more recent ideas of Narayanaswamy et
al.~\cite{NarayanaswamyRamanRamanujanSaurabh2012} who solve a
similar problem: they find a way to \emph{preprocess} the input
graph using reduction rules in such a way that branching on small
structures in the resulting graph would make \emph{their measure}
\((k-LP)\) drop on each branch. We find that their reduction rules
do not increase our measure, and hence that we can safely apply
these rules to obtain a graph where we can pick up to two vertices
in each branch and still have control over how \(LP\) changes.
The \emph{branching rules} of Narayanaswamy et al., however, are
not of use to us: these rules help control the drop in \(LP\), but
they provide no control over how \(MM\) changes. Note that for our
measure \(\hat{k}\) to drop we need, roughly speaking, (i) a good
drop in \(k\), (ii) a small drop in \(LP\), and (iii) a \emph{good
  drop} in \(MM\). None of the branching strategies of
Narayanaswamy et al.  or Lokshtanov et al. (or others in the
literature which we tried) help with this.

To get past this point we look at the classical
\emph{Gallai-Edmonds decomposition} of the reduced graph, which
can be computed in polynomial
time~\cite{Gallai1963,Gallai1964,Edmonds1965,LovaszPlummerBook2009}. We
prove that by carefully choosing edges to branch based on this
decomposition, we can ensure that both \(LP\) and \(MM\) change in
a way which gives us a net drop in the measure \(\hat{k}\). The
key ingredient and the most novel aspect of our algorithm---as
compared to existing algorithms for \VC---is the way in which we
exploit the Gallai-Edmonds decomposition to find small
structures---edges and vertices---on which we can branch
profitably. While this part is almost trivial to implement, most
of the technical effort in the paper has gone into proving that
our choices are correct. See \autoref{algorithm_outline} for an
outline which highlights the new parts, and \autoref{algorithm} on
page~\pageref{algorithm} for the complete algorithm.

\begin{algorithm}[t]
  \caption{An outline of the algorithm for \VCALP.}\label{algorithm_outline}
\begin{algorithmic}[1]
\Function{VCAL-P}{$(G,\hat{k})$}
  \State Exhaustively apply the three reduction rules of Narayanaswamy et al. to
  \((G,\hat{k})\). 
  \State Let \((G,\hat{k})\) denote the resulting graph on which no rule applies.
  \If{\((G,\hat{k})\) is a trivial instance}
    \State \textbf{return} \texttt{True} or \texttt{False} as appropriate.
  \EndIf
  %\Statex
  \State Compute the Gallai-Edmonds decomposition \(V(G)=O\uplus{}I\uplus{}P\) of \(G\).
  %\Statex
  \If {\(G[I\cup{}P]\) contains at least one edge \(\{u,v\}\)}
    \State Branch on the edge \(\{u,v\}\). \(\hat{k}\) drops by
    \(1\) on each branch.
  %\ElsIf {\(I\neq\emptyset\)} 
  %  \State Choose a suitable edge \(\{i,o\}\;;\;i\in{}I,o\in{}O\) 
  %  \State Branch on the edge \(\{i,o\}\):
  %  \State \hspace{\algorithmicindent}When we pick \(i\) into the
  %  solution in one branch, \(\hat{k}\) drops by \(1\).
  %  \State \hspace{\algorithmicindent}When we pick \(o\) into the
  %  solution in the other branch, \(\hat{k}\) \emph{may not}
  %  drop. We find a suitable edge in \(G'=(G\setminus{}o)\) and
  %  branch on its end-points to make \(\hat{k}\) drop.
  \Else\Comment{Now \(P=\emptyset\).}
    \State Branch on a vertex \(u\in{}O=V(G)\) and two of its neighbours \(v,w\in{O}\): 
    \State \hspace{\algorithmicindent}When we pick both of \(v,w\) into the
    solution in one branch, \(\hat{k}\) drops by \(1\).
    \State \hspace{\algorithmicindent}When we pick \(u\) into the
    solution in the other branch, \(\hat{k}\) \emph{may not}
    drop. We find a suitable edge in  \(G'=(G\setminus{}u)\) and
    branch on its end-points to make \(\hat{k}\) drop.
  \EndIf
\EndFunction
\end{algorithmic}
\end{algorithm}

% \todo[inline,disable]{Aside: Can we juggle the values involved in
%   \measure to get a faster algorithm for the ``basic''
%   parameterized \VC? If \(MM\) is too large we can say no. I have
%   a hunch that \(MM\) cannot be too small compared to \(OPT\)
%   (say, less than \(OPT/2\). For intermediate values of \(MM\),
%   what does the drop in our measure translate to in terms of
%   \(OPT\)?}
 
% \medskip\noindent\textbf{Related Work.} The general idea of
% above/below guarantee problems and a bit of history, other such
% problems (survey?).

% \medskip\noindent\textbf{Organization of the Rest of the Paper.}

%%% Local Variables: 
%%% mode: latex
%%% TeX-master: "main"
%%% End: 

\section{Preliminaries}\label{sec:preliminaries}
We use \(\uplus\) to denote the disjoint union of sets. All our
graphs are undirected and simple. \(V(G)\) and \(E(G)\) denote,
respectively, the vertex and edge sets of a graph \(G\).  \(G[X]\)
is the subgraph of \(G\) \emph{induced} by a vertex subset
\(X\subseteq{V(G)}\):
\(G[X]=(X,F)\;;\;F=\{\{v,w\}\in{E(G)}\;;\;v,w\in{X}\}\).
\(MM(G)\) is the matching number of graph \(G\), and \(OPT(G)\) is
the vertex cover number of \(G\). A matching \(M\) in graph \(G\)
\emph{saturates} each vertex which is an end-point of an edge in
\(M\), and \emph{exposes} every other vertex in \(G\). \(M\) is a
\emph{perfect matching} if it saturates all of \(V(G)\). \(M\) is
a \emph{near-perfect matching} if it saturates all but one vertex
of \(V(G)\). Graph \(G\) is said to be \emph{factor-critical} if
for each \(v\in{V(G)}\) the induced subgraph
\(G[(V(G)\setminus\{v\})]\) has a perfect matching.

For \(X\subseteq{V(G)}\), \(N(X)\) is the set of neighbours of
\(X\) which are not in \(X\):
\(N(X)=\{v\in{(V(G)\setminus{X})}\;;\;\exists{w}\in{X}\;:\;\{v,w\}\in{E(G)}\}\).
\(X\subseteq{V(G)}\) is an \emph{independent set} in graph \(G\)
if no edge in \(G\) has both its end-points in \(X\).  The
\emph{surplus} of an independent set \(X\subseteq{V(G)}\) is
\(\surplus{X}=(|N(X)|-|X|)\).  The \emph{surplus of a graph
  \(G\)}, \(\surplus{G}\), is the minimum surplus over all
independent sets in \(G\).  Graph \(G\) is a \emph{bipartite
  graph} if \(V(G)\) can be partitioned as \(V(G)=X\uplus{Y}\)
such that every edge in \(G\) has exactly one end point in each of
the sets \(X,Y\). Hall's Theorem tells us that a bipartite graph
\(G=((X\uplus{Y}),E)\) contains a matching which saturates all
vertices of the set \(X\) if and only if
\(\;\forall{S}\subseteq{X}\;:\;|N(S)|\geq|S|\). K\"{o}nig's
Theorem tells us that for a bipartite graph \(G\),
\(OPT(G)=MM(G)\).

The \emph{linear programming (LP) relaxation} of the standard LP
formulation for \VC for a graph \(G\) (the \emph{relaxed \VC LP
  for \(G\)} for short), denoted \(LPVC(G)\), is:
\begin{alignat*}{2}\label{vclp}
    \text{minimize }   & \sum_{v\in{V(G)}}x_{v}\ \\
    \text{subject to } & x_u + x_v \geq 1\  &,\ & \{u,v\} \in E\\
                       & 0\leq{x_v}\leq1\ &,\ & v\in{V(G)}
\end{alignat*}
A \emph{feasible solution} to this LP is an assignment of values
to the variables \(x_{v}\;;\;v\in{V(G)}\) which satisfies all the
conditions in the LP, and an \emph{optimum solution} is a feasible
solution which minimizes the value of the objective function
\(\sum_{v\in{V(G)}}x_{v}\). We use \(w(x)\) to denote the value
(of the objective function) of a feasible solution \(x\) to
\(LPVC(G)\), and \(LP(G)\) to denote the value of an optimum
solution to \(LPVC(G)\). \(OPT(G)\) and \(MM(G)\) are then the
values of optimum solutions to the \emph{integer} programs
corresponding to \(LPVC(G)\) and to its LP \emph{dual},
respectively~\cite{BourjollyPulleyblank1989}. It follows that for
any graph \(G\), \(MM(G)\leq{LP(G)}\leq{OPT(G)}\).  Our stronger
lower bound for \(OPT(G)\) is motivated by a similar bound due to
Lov\'{a}sz and
Plummer~\cite[Theorem~6.3.3]{LovaszPlummerBook2009}:

\begin{lemma}\label{lem:lower_bound}
  For any graph \(G\), \(OPT(G)\geq(2LP(G)-MM(G))\).
\end{lemma}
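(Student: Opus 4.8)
\medskip\noindent\textbf{Proof plan.}
The plan is to peel off the ``forced'' part of a fractional optimum via the Nemhauser--Trotter decomposition and reduce the lemma to a purely structural statement about graphs of nonnegative surplus, which I would then prove using the Gallai--Edmonds decomposition. First I would fix a half-integral optimum solution \(x\) of \(LPVC(G)\) --- the relaxed \VC LP has a half-integral optimum, a classical fact --- and partition \(V(G)=V_{0}\uplus V_{1/2}\uplus V_{1}\) according to whether \(x\) assigns a vertex \(0\), \(\frac12\), or \(1\); write \(G'=G[V_{1/2}]\). Next I would record the standard Nemhauser--Trotter facts about this partition: \(V_{0}\) is independent; \(N(V_{0})=V_{1}\) (if some \(v\in V_{1}\) had no neighbour in \(V_{0}\), then all its neighbours have value \(\geq\frac12\), so resetting \(x_{v}:=\frac12\) stays feasible and lowers the objective); there is a minimum vertex cover of \(G\) containing \(V_{1}\) and disjoint from \(V_{0}\), so \(OPT(G)=|V_{1}|+OPT(G')\); and \(LP(G)=|V_{1}|+\frac12|V(G')|\). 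Two more ingredients I would prove by similar LP-exchange arguments: (i) \(\surplus{G'}\geq 0\) --- if an independent \(S\subseteq V(G')\) had \(|N_{G'}(S)|<|S|\), then setting \(x\) to \(0\) on \(S\) and to \(1\) on \(N_{G'}(S)\) stays feasible (there are no edges between \(V_{1/2}\) and \(V_{0}\) since \(\frac12+0<1\)) and strictly decreases the objective; and (ii) \(MM(G)\geq |V_{1}|+MM(G')\) --- an exchange argument gives Hall's condition for matching \(V_{1}\) into \(V_{0}\), and a matching saturating \(V_{1}\) inside \(V_{0}\cup V_{1}\), together with a maximum matching of \(G'\), forms such a matching of \(G\).

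Granting (i) and (ii), the lemma reduces to the claim that \emph{every graph \(H\) with \(\surplus{H}\geq 0\) satisfies \(MM(H)\geq\alpha(H)\)}, where \(\alpha(H)=|V(H)|-OPT(H)\) is the independence number (Gallai's identity). Indeed, \(OPT(G)=|V_{1}|+OPT(G')=|V_{1}|+|V(G')|-\alpha(G')\); applying the claim to \(G'\) this is at least \(|V_{1}|+|V(G')|-MM(G')\), which by (ii) is at least \(2|V_{1}|+|V(G')|-MM(G)\), and the last quantity equals \(2LP(G)-MM(G)\) because \(LP(G)=|V_{1}|+\frac12|V(G')|\).

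To prove the claim I would invoke the Gallai--Edmonds decomposition \(V(H)=D\uplus A\uplus C\)~\cite{LovaszPlummerBook2009}: the components \(D_{1},\dots,D_{c}\) of \(H[D]\) are factor-critical, there are no edges between \(D\) and \(C\), \(H[C]\) has a perfect matching, and \(MM(H)=|A|+\frac12|C|+\sum_{i}\frac{|D_{i}|-1}{2}\). I would first prove an auxiliary fact: a factor-critical graph on \(s\geq 3\) vertices has independence number at most \((s-1)/2\). (If an independent set \(I\) had \(|I|\geq(s+1)/2\), pick \(u\notin I\) --- such \(u\) exists since \(H\) has an edge --- and note that the perfect matching of \(H-u\) must match \(I\) injectively into \(V(H-u)\setminus I\), a set of size \((s-1)-|I|<|I|\), which is impossible.) Then, for a maximum independent set \(I\) of \(H\), split \(I=(I\cap A)\uplus(I\cap C)\uplus I_{1}\uplus\bigl(\bigcup_{|D_{i}|\geq 3}(I\cap D_{i})\bigr)\), where \(I_{1}\) collects the vertices of \(I\) lying in the \emph{singleton} components of \(H[D]\). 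The set \(I_{1}\) is independent, its \(H\)-neighbourhood lies in \(A\), and it is disjoint from \(I\); hence \(\surplus{H}\geq 0\) forces \(|I_{1}|+|I\cap A|\leq|A|\). Combining this with \(|I\cap C|\leq\frac12|C|\) (perfect matching on \(C\)) and the auxiliary fact applied to each \(D_{i}\) with \(|D_{i}|\geq 3\) (the singleton components contributing \(0\) to \(\sum_{i}\frac{|D_{i}|-1}{2}\)), all four parts add up to exactly \(|A|+\frac12|C|+\sum_{i}\frac{|D_{i}|-1}{2}=MM(H)\), which is the claim.

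The hard part is this last step. A direct count of \(\alpha(H)\) through the Gallai--Edmonds pieces overshoots \(MM(H)\) by exactly the number of singleton components of \(H[D]\), and the entire role of the surplus hypothesis is to wipe out this slack --- and to do so \emph{on the nose}, not merely up to an additive constant. The subtlety is that the hypothesis must be applied to the \emph{right} independent set, namely the union of those singletons inside a fixed maximum independent set \(I\) (whose external neighbours all lie in \(A\)); making the bookkeeping close exactly, and verifying it is consistent with \(I\) being globally maximum rather than maximum component-by-component, is where care is needed. This is also the point at which I expect the connection to the Lov\'{a}sz--Plummer bound \cite[Theorem~6.3.3]{LovaszPlummerBook2009} to become visible.
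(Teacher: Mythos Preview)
Your proof is correct, but it takes a substantially different route from the paper's. The paper gives a short direct argument: starting from a minimum vertex cover \(S\) of \(G\), it forms the bipartite graph \(H\) on the cut \((S,V(G)\setminus S)\), takes a minimum vertex cover \(T\) of \(H\) (so \(|T|=MM(H)\leq MM(G)\) by K\"onig), and then checks that assigning \(1\) to \(S\cap T\), \(\tfrac12\) to \((S\cup T)\setminus(S\cap T)\), and \(0\) elsewhere is a feasible solution to \(LPVC(G)\). Its value is \(\tfrac{|S|+|T|}{2}\leq\tfrac{OPT(G)+MM(G)}{2}\), so \(LP(G)\leq\tfrac{OPT(G)+MM(G)}{2}\), which rearranges to the lemma. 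No decomposition theorems are needed.

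Your approach, by contrast, first peels off the integral part via Nemhauser--Trotter, reducing the statement to \(\alpha(H)\leq MM(H)\) for graphs \(H\) of nonnegative surplus, and then proves this via Gallai--Edmonds by bounding the contribution of each piece of a maximum independent set (with the surplus hypothesis used precisely to control the singleton components of \(H[D]\)). This is considerably longer, but it is structurally informative: it pinpoints the equivalence between the Lov\'asz--Plummer bound and the inequality \(\alpha\leq MM\) on the ``all-\(\tfrac12\)'' part, and shows exactly where the slack sits. One small wording issue: in your sentence ``the set \(I_{1}\) is independent, its \(H\)-neighbourhood lies in \(A\), and it is disjoint from \(I\)'', the antecedent of ``it'' must be the neighbourhood \(N_{H}(I_{1})\), not \(I_{1}\) itself; you should say this explicitly. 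With that clarified, the argument goes through.
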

\begin{proof} 
  Let \(S\) be a smallest vertex cover of graph \(G\), and let
  \(H=((S\uplus(V(G)\setminus{S})),F)\) be the bipartite subgraph
  of \(G\) with
  \(F=\{u,v\}\in{E(G)}\;;\;u\in{S},v\in(V(G)\setminus{S})\). That
  is, the vertex set of \(H\) is \(V(G)\) with the bipartition
  \((S,(V(G)\setminus{S}))\), and the edge set of \(H\) consists
  of exactly those edges of \(G\) which have one end-point in
  \(S\) and the other in \((V(G)\setminus{S})\).  Let \(T\) be a
  smallest vertex cover of graph \(H\). Then \(|T|=OPT(H)=MM(H)\),
  where the second equality follows from K\"{o}nig's
  Theorem. Consider the following assignment \(y\) of values to
  variables \(y_{v}\;;\;v\in{V(G)}\):
  \[
  y_{v}=
  \begin{cases}
    1 & \text{if } v\in(S\cap{T}) \\
    \frac{1}{2} & \text{if }v\in((S\cup{T})\setminus(S\cap{T}))\\
    0 & \text{otherwise}.
  \end{cases}
  \]
  
  Observe that \(0\leq{y_{v}}\leq1\) for each \(v\in{V(G)}\). We
  claim that \(y\) is a feasible solution to \(LPVC(G)\). Indeed,
  since \(S\) is a vertex cover of \(G\), every edge
  \(\{u,v\}\in{E(G)}\) must have at least one end-point in
  \(S\). If \(\{u,v\}\cap(S\cap{T})\neq\emptyset\) then \(y\)
  assigns the value \(1\) to at least one of \(y_{u},y_{v}\), and
  so we have that \(y_{u}+y_{v}\geq1\). Otherwise, if
  \(\{u,v\}\subseteq(S\setminus(S\cap{T}))\) then \(y\) assigns
  the value \(\frac{1}{2}\) to both of \(y_{u},y_{v}\), and so we
  have that \(y_{u}+y_{v}=1\). In the only remaining case, exactly
  one of \(\{u,v\}\) is in \(S\), and the other vertex is in
  \((V(G)\setminus{S})\). Without loss of generality, suppose
  \(\{u,v\}\cap{S}=\{u\},v\in(V(G)\setminus{S})\). Then
  \(\{u,v\}\in{E(H)}\) and \(u\notin{T}\), hence we get---since
  \(T\) is a vertex cover of \(H\)---that \(v\in{T}\). Thus
  \(\{u,v\}\subseteq((S\cup{T})\setminus(S\cap{T}))\), and so
  \(y\) assigns the value \(\frac{1}{2}\) to both of
  \(y_{u},y_{v}\) and we have that \(y_{u}+y_{v}=1\). Thus the
  assignment \(y\) satisfies all the conditions in the
  \(LPVC(G)\), and hence is a feasible solution to
  \(LPVC(G)\). Thus \(w(y)\geq{LP(G)}\). 

  Observe now that
  \[w(y) = \frac{|S|+|T|}{2} = \frac{OPT(G)+OPT(H)}{2} =
  \frac{OPT(G)+MM(H)}{2} \leq \frac{OPT(G)+MM(G)}{2},\]
  where the first equality follows from the way we defined \(y\),
  and the inequality follows from the observation that the
  matching number of the subgraph \(H\) of \(G\) cannot be
  \emph{larger} than that of \(G\) itself. Putting these together
  we get that \(LP(G)\leq{w(y)}\leq\frac{OPT(G)+MM(G)}{2}\), which
  in turn gives the bound in the lemma.
\end{proof}

For any graph \(G\) there exists an \emph{optimum} solution to
\(LPVC(G)\) in which
\(x_{v}\in\{0,\frac{1}{2},1\}\;;\;v\in{V(G)}\)~\cite{NemhauserTrotter1974}. Such
an optimum solution is called a \emph{half-integral solution} to
\(LPVC(G)\), and we can find such a solution in polynomial
time~\cite{NemhauserTrotter1975}. Whenever we refer to an optimum
solution to \(LPVC(G)\) in the rest of the paper, we mean a
half-integral solution. Given a half-integral solution \(x\) to
\(LPVC(G)\), we define \(V_{i}^{x}=\{v\in{V(G)}\;;\;x_{v}=i\}\)
for each \(i\in\{0,\frac{1}{2},1\}\). For any optimal
half-integral solution \(x\) we have that
\(N(V^{x}_{0})=V^{x}_{1}\). Given a graph \(G\) as input we can,
in polynomial time, compute an optimum half-integral solution
\(x\) to \(LPVC(G)\) such that for the induced subgraph
\(H=G[V_{1/2}^{x}]\), setting all variables to the value
\(\frac{1}{2}\) is the \emph{unique} optimum solution to
\(LPVC(H)\)~\cite{NemhauserTrotter1975}. Graphs which satisfy the
latter property must have positive surplus, and conversely:

\begin{lemma}\textup{\textbf{\cite{Pulleyblank1979}}}\label{lem:all-halves_positive_surplus}
  For any graph \(G\), all-\(\frac{1}{2}\) is the unique optimum
  solution to \(LPVC(G)\) if and only if \(\surplus{G}>0\).
\end{lemma}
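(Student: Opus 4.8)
The plan is to prove the two implications separately, each time by a small local modification of an LP solution. Throughout write \(n=|V(G)|\), and call \(\mathbf{h}\) the assignment \(x_v=\frac{1}{2}\) for every \(v\); note \(\mathbf{h}\) is feasible for \(LPVC(G)\) with value \(\frac{n}{2}\), so \(LP(G)\le\frac{n}{2}\) in all cases.

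For the ``only if'' direction I would argue the contrapositive: suppose \(\surplus{G}\le 0\), witnessed by a nonempty independent set \(X\) with \(|N(X)|\le|X|\). Consider the assignment that is \(0\) on \(X\), \(1\) on \(N(X)\), and \(\frac{1}{2}\) elsewhere. It is feasible: the only edges touching \(X\) run to \(N(X)\) (sum \(0+1\)); every other edge touching \(N(X)\) has its other endpoint valued \(\ge\frac{1}{2}\) (sum \(\ge 1\)); and edges avoiding \(X\cup N(X)\) have sum \(1\). Its objective value is \(\frac{n}{2}+\frac{1}{2}(|N(X)|-|X|)\le\frac{n}{2}\). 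Hence \(\mathbf{h}\) is either not optimal (if the inequality is strict) or not the unique optimum (if it is an equality, since the new solution differs from \(\mathbf{h}\) on the nonempty set \(X\)); in either case \(\mathbf{h}\) is not the unique optimal solution, which is the contrapositive of what is wanted.

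For the ``if'' direction, assume \(\surplus{G}>0\), let \(z\) be an arbitrary optimal solution of \(LPVC(G)\), and aim to force \(z=\mathbf{h}\). Put \(A=\{v:z_v<\frac{1}{2}\}\). If \(A=\emptyset\) but \(z\neq\mathbf{h}\), pick \(v\) with \(z_v>\frac{1}{2}\); every neighbour \(w\) of \(v\) has \(z_w\ge 1-z_v\), and equality is impossible as it would place \(w\) in \(A\), so \(z_v\) can be lowered slightly without breaking feasibility, contradicting optimality. So the real case is \(A\neq\emptyset\). Then \(A\) is independent (an edge inside \(A\) violates its constraint), and every vertex of \(N(A)\) has \(z\)-value strictly above \(\frac{1}{2}\). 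Now apply the \emph{paired} perturbation: raise every vertex of \(A\) by a common small \(\delta>0\) and lower every vertex of \(N(A)\) by the same \(\delta\) (small enough that all values stay in \([0,1]\)). Edge sums between \(A\) and \(N(A)\) are unchanged; an edge inside \(N(A)\) has both endpoints above \(\frac{1}{2}\), hence sum strictly above \(1\); an edge from \(N(A)\) to a vertex outside \(A\cup N(A)\) also has sum strictly above \(1\), since a sum equal to \(1\) with one endpoint above \(\frac{1}{2}\) would force the other below \(\frac{1}{2}\), i.e.\ into \(A\); and edges avoiding \(A\cup N(A)\) are untouched. So the perturbed assignment is feasible for small \(\delta\), and its objective changes by \(-\delta(|N(A)|-|A|)=-\delta\,\surplus{A}\le-\delta\,\surplus{G}<0\), contradicting optimality of \(z\). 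Hence \(A=\emptyset\) and \(z=\mathbf{h}\), so \(\mathbf{h}\) is the unique optimum.

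The main obstacle I anticipate is exactly this ``if'' direction: it is tempting to settle for uniqueness among half-integral optima, where the Nemhauser--Trotter canonical structure makes the claim short, but the lemma asserts uniqueness among \emph{all} feasible solutions, so the argument must survive an arbitrary, possibly non-half-integral, optimal \(z\). The decisive point is the \emph{paired} move (raise \(A\) and lower \(N(A)\) together) rather than a one-sided shift: a one-sided move either collides with a tight \(A\)--\(N(A)\) constraint or does not improve the objective, whereas the paired move cancels the \(A\)--\(N(A)\) constraints exactly, leaves every other constraint with strict slack to absorb a small step, and has first-order effect on the objective equal to \(-\delta\,\surplus{A}\), which is negative precisely because \(\surplus{G}>0\).
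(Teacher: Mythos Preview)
Your proof is correct. Note, however, that the paper does not actually supply a proof of this lemma: it is quoted as a known result, with a citation to Pulleyblank~(1979), and is used as a black box. So there is no ``paper's own proof'' to compare against.

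That said, your argument is a clean self-contained proof of the statement. Both directions are handled properly. In the contrapositive of the ``only if'' direction, the \(0/1/\tfrac12\) assignment built from a witness set \(X\) with \(|N(X)|\le|X|\) is indeed feasible and has value at most \(\tfrac{n}{2}\), and you correctly distinguish the strict and equality cases. In the ``if'' direction, the key paired perturbation---raise \(A\) by \(\delta\), lower \(N(A)\) by \(\delta\)---is exactly the right move: you correctly verify that every edge constraint touching \(N(A)\) but not \(A\) has strict slack (since the outside endpoint has value \(\ge\tfrac12\) and the \(N(A)\) endpoint has value \(>\tfrac12\)), so a sufficiently small \(\delta\) preserves feasibility, and the objective drops by \(\delta\cdot\surplus{A}\ge\delta\cdot\surplus{G}>0\). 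The separate handling of the case \(A=\emptyset,\ z\neq\mathbf{h}\) is also fine, since any vertex with value above \(\tfrac12\) then has strict slack on all its edge constraints and can be lowered.

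The only implicit assumption worth flagging is that \(\surplus{G}\) is a minimum over \emph{nonempty} independent sets; otherwise the empty set would force \(\surplus{G}\le 0\) always. The paper's phrasing is loose on this point, but the intended (and standard) reading is the nonempty one, and your proof uses it consistently.
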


In fact, the surplus of a graph \(G\) is a lower bound on the
number of vertices that we can delete from \(G\), and have a
\emph{guaranteed} drop of \emph{exactly} \(\frac{1}{2}\) \emph{per
  deleted vertex} in \(LP(G)\):
\begin{lemma}\label{lem:surplus_lp_drop}
  Let \(G\) be a graph with \(\surplus{G}\geq{s}\). Then deleting
  any subset of \(s\) vertices from \(G\) results in a graph
  \(G'\) such that \(LP(G')=LP(G)-\frac{s}{2}\).
\end{lemma}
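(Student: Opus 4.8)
The plan is to prove the two inequalities \(LP(G')\le LP(G)-\frac{s}{2}\) and \(LP(G')\ge LP(G)-\frac{s}{2}\) separately. First I would dispose of the trivial case \(s=0\) (then \(G'=G\)), so assume \(s\ge 1\); then \(\surplus{G}\ge s\ge 1>0\), and \autoref{lem:all-halves_positive_surplus} tells us that all-\(\frac{1}{2}\) is the \emph{unique} optimum solution to \(LPVC(G)\), so in particular \(LP(G)=\frac{|V(G)|}{2}\). Since \(|V(G')|=|V(G)|-s\), the upper bound is immediate: the all-\(\frac{1}{2}\) assignment restricted to \(V(G')\) is still feasible for \(LPVC(G')\) (every surviving edge has both endpoints at value \(\frac{1}{2}\), so its constraint is met), whence \(LP(G')\le \frac{|V(G')|}{2}=\frac{|V(G)|}{2}-\frac{s}{2}=LP(G)-\frac{s}{2}\).

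The lower bound is the real content, and I would establish it by contradiction. Suppose \(LP(G')<\frac{|V(G')|}{2}\) and fix an optimal half-integral solution \(x'\) to \(LPVC(G')\), which exists by the facts recalled above. Writing \(w(x')=|V_{1}^{x'}|+\frac{1}{2}|V_{1/2}^{x'}|\) and comparing it with \(\frac{1}{2}\bigl(|V_{0}^{x'}|+|V_{1/2}^{x'}|+|V_{1}^{x'}|\bigr)\), the hypothesis forces \(|V_{1}^{x'}|<|V_{0}^{x'}|\); in particular \(A:=V_{0}^{x'}\neq\emptyset\). Because \(x'\) is an optimal half-integral solution we have \(N_{G'}(A)=V_{1}^{x'}\), hence \(|N_{G'}(A)|<|A|\).

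Now comes the one step that needs care: relating the neighbourhood of \(A\) in \(G\) to its neighbourhood in \(G'\). The set \(A\) is independent in \(G'\) and therefore also in \(G\) (an induced subgraph), and every neighbour of \(A\) in \(G\) either survives in \(G'\)---and is then counted in \(N_{G'}(A)\)---or lies among the at most \(s\) deleted vertices. Thus \(|N_{G}(A)|\le |N_{G'}(A)|+s<|A|+s\), i.e.\ \(\surplus{A}=|N_{G}(A)|-|A|<s\) in \(G\), which contradicts \(\surplus{G}\ge s\) since \(A\) is a nonempty independent set of \(G\). Hence \(LP(G')\ge\frac{|V(G')|}{2}=LP(G)-\frac{s}{2}\), and combining with the upper bound gives \(LP(G')=LP(G)-\frac{s}{2}\). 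I expect this counting inequality \(|N_{G}(A)|\le|N_{G'}(A)|+s\), together with the correct translation of ``\(w(x')\) is small'' into ``\(|V_{0}^{x'}|>|V_{1}^{x'}|\)'', to be the only places demanding attention; everything else is a direct appeal to the quoted facts (feasibility of all-\(\frac{1}{2}\), existence of an optimal half-integral solution, the identity \(N(V_{0}^{x})=V_{1}^{x}\) for such solutions, and \autoref{lem:all-halves_positive_surplus}).
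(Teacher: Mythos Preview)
Your argument is correct, and it takes a genuinely different route from the paper's proof. The paper proceeds by induction on \(s\): for \(s=1\) it argues that if \(LP(G')<\frac{n-1}{2}\) then extending an optimum of \(LPVC(G')\) by assigning the deleted vertex value \(1\) yields a non-all-\(\frac{1}{2}\) optimum of \(LPVC(G)\), contradicting \autoref{lem:all-halves_positive_surplus}; the inductive step then uses the fact that deleting a single vertex lowers the surplus by at most one. You instead handle arbitrary \(s\) in one shot: from a putative half-integral optimum \(x'\) of \(LPVC(G')\) with value below \(\frac{|V(G')|}{2}\) you extract the independent set \(A=V_{0}^{x'}\) and show directly that its surplus in \(G\) is at most \(s-1\), using only \(N_{G}(A)\subseteq N_{G'}(A)\cup(\text{deleted vertices})\). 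Your approach is more direct and more transparently explains \emph{why} the surplus hypothesis is exactly what is needed; the paper's induction is shorter to state but hides this connection behind the ``surplus drops by at most one per deletion'' observation and a somewhat ad hoc base case.
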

\begin{proof}
  The proof is by induction on \(s\). Let \(n=|V(G)|\). The case
  \(s=0\) is trivially true. Suppose \(s=1\). Then by
  \autoref{lem:all-halves_positive_surplus} all-\(\frac{1}{2}\) is the
  unique optimum solution to \(LPVC(G)\), and so
  \(LP(G)=\frac{n}{2}\). Let \(v\in{V(G)}\), and let
  \(G'=G[V(G)\setminus\{v\}]\). Then \(|V(G'|=(n-1)\). Since
  all-\(\frac{1}{2}\) is a \emph{feasible} solution to
  \(LPVC(G')\), we get that
  \(LP(G')\leq\frac{(n-1)}{2}=LP(G)-\frac{1}{2}\). If possible,
  let \(x'\) be an optimum solution for \(LPVC(G')\) such that
  \(w(x')<\frac{(n-1)}{2}\). % Without loss of generality, we may
  % assume that \(x'\) is a half-integral solution, and so
  From the half-integrality property of relaxed \VC LP
  formulations we get that \(w(x')\leq(\frac{n}{2}-1)\). Now we
  can assign the value \(1\) to vertex \(v\) and the values \(x'\)
  to the remaining vertices of graph \(G\), to get a solution
  \(x\) such that \(w(x)=\frac{n}{2}=LP(G)\). Thus \(x\) is an
  \emph{optimum} solution to \(LPVC(G)\) which is \emph{not}
  all-\(\frac{1}{2}\), a contradiction. So we get that
  \(LP(G')=LP(G)-\frac{1}{2}\), proving the case \(s=1\).

  For the induction step, let \(s\geq2\). Then by
  \autoref{lem:all-halves_positive_surplus} all-\(\frac{1}{2}\) is
  the unique optimum solution to \(LPVC(G)\), and so
  \(LP(G)=\frac{n}{2}\). Let \(v\) be an arbitrary vertex in
  \(G\), and let \(G'=G[V(G)\setminus\{v\}]\). Since deleting a
  single vertex from \(G\) cannot cause the surplus of \(G\) to
  drop by more than \(1\), we get that
  \(\surplus{G'}\geq(s-1)\geq1\). So from
  \autoref{lem:all-halves_positive_surplus} we get that
  \(LP(G')=\frac{(n-1)}{2}\).  Applying the induction hypothesis
  to \(G'\) and \(s-1\), we get that deleting any subset of
  \(s-1\) vertices from \(G'\) results in a graph \(G''\) such
  that
  \(LP(G'') = LP(G')-\frac{(s-1)}{2} =
  \frac{(n-1)}{2}-\frac{(s-1)}{2} = \frac{(n-s)}{2}\). This
  completes the induction step.
\end{proof}

There is a matching between the vertex sets which get the values
\(0\) and \(1\) in an optimal half-integral solution to the LP.

\begin{lemma}\label{lem:bimatch}
  Let \(G\) be a graph, and let \(x\) be an optimal half-integral
  solution to \(LPVC(G)\). Let
  \(H=((V^{x}_{1}\uplus{V^{x}_{0}}),F)\) be the bipartite subgraph
  of \(G\) where
  \(F=\{u,v\}\in{E(G)}\;;\;u\in{V^{x}_{1}},v\in{V^{x}_{0}}\). Then
  there exists a (maximum) matching of \(H\) which saturates all
  of $V^x_1$.
\end{lemma}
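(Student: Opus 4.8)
The plan is to apply Hall's Theorem to the side $V^x_1$ of the bipartition of $H$. I would first dispatch the parenthetical ``maximum'': since $H$ is bipartite with $V^x_1$ as one of its two sides, every matching of $H$ uses pairwise distinct vertices of $V^x_1$, so $MM(H)\le|V^x_1|$; hence any matching of $H$ that saturates all of $V^x_1$ has size exactly $|V^x_1|$ and is therefore a maximum matching of $H$. So it suffices to exhibit a matching of $H$ saturating $V^x_1$, and by Hall's Theorem this amounts to proving that $|N(S)\cap V^x_0|\ge|S|$ for every $S\subseteq V^x_1$ (in $H$, the set of neighbours of a subset $S\subseteq V^x_1$ is precisely $N(S)\cap V^x_0$).

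The heart of the argument is an LP-exchange contradiction. Suppose some $S\subseteq V^x_1$ violates Hall's condition, and put $T:=N(S)\cap V^x_0$, so that $|T|<|S|$. I would perturb the optimal solution $x$ into a new assignment $x'$ by lowering $x_v$ from $1$ to $\frac{1}{2}$ for every $v\in S$, raising $x_v$ from $0$ to $\frac{1}{2}$ for every $v\in T$, and leaving all other coordinates unchanged; clearly $0\le x'_v\le 1$ everywhere, and note $S\cap T=\emptyset$ since $V^x_0\cap V^x_1=\emptyset$. The next step is to check that $x'$ is feasible for $LPVC(G)$, via a short case analysis on an arbitrary edge $\{a,b\}\in E(G)$. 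If neither endpoint lies in $S\cup T$, the constraint is untouched. If $a\in S$, then $x'_a=\frac{1}{2}$ and either $b\in V^x_{1/2}\cup V^x_1$, in which case $x'_b\ge\frac{1}{2}$, or $b\in V^x_0$, in which case $b\in N(S)\cap V^x_0=T$ and $x'_a+x'_b=1$; either way $x'_a+x'_b\ge 1$ (the case $b\in S$ is subsumed by the first alternative). Finally, if $a\in T$, then $x_a=0$, so feasibility of $x$ forces $x_b=1$, i.e. $b\in V^x_1$; if $b\notin S$ then $x'_b=1$, and if $b\in S$ then $x'_a+x'_b=\frac{1}{2}+\frac{1}{2}=1$. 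In all cases $x'_a+x'_b\ge 1$, so $x'$ is feasible. But then $w(x')=w(x)-\frac{1}{2}|S|+\frac{1}{2}|T|=w(x)-\frac{1}{2}(|S|-|T|)<w(x)=LP(G)$, contradicting the optimality of $x$. Hence Hall's condition holds for $V^x_1$ and the desired matching exists.

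The only step I expect to require care is the feasibility verification --- in particular, confirming that lowering the $S$-vertices and raising the $T$-vertices cannot together leave any edge undercovered. The two facts that make this work are that every neighbour of a $0$-vertex of $x$ must be a $1$-vertex (immediate from feasibility of $x$), and that $T$ is by construction exactly the set of $V^x_0$-neighbours of $S$, so the two perturbations interact only on edges between $S$ and $T$, where the new values sum to exactly $1$. The remaining pieces --- the maximality observation and the value computation $w(x')=w(x)-\frac{1}{2}(|S|-|T|)$ --- are routine.
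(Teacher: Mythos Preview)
Your proof is correct and follows essentially the same approach as the paper's: both verify Hall's condition for the side $V^x_1$ by assuming a violating set and constructing a cheaper feasible LP solution that sets the violating set and its $V^x_0$-neighbourhood to $\tfrac{1}{2}$. Your version is in fact more thorough, since you spell out the feasibility case analysis that the paper omits and you explicitly address why such a matching is maximum.
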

\begin{proof} It is enough to show that for every
  \(X\subseteq{V^{x}_{1}}\) the inequality \(|N(X)|\geq|X|\) holds
  in the graph \(H\), and the rest will follow by Hall's Theorem.
  So let there exist some \(X\subseteq{V^{x}_{1}}\) such that
  \(|N(X)|<|X|\) in \(H\). Now consider a solution \(x^{*}\) to
  \(LPVC(G)\) in which all the vertices of \(G\) have same values
  as in \(x\), \emph{except} that all vertices in \(X\cup{N(X)}\)
  get the value \(\frac{1}{2}\). It is not difficult to verify
  that \(x^{*}\) is a feasible solution to \(LPVC(G)\). Now
  \(w(x^{*})=w(x)+\frac{|N(X)|-|X|}{2}<w(x)\), which is a
  contradiction since we assumed that \(x\) is a solution to
  \(LPVC(G)\) with the minimum value. The lemma follows.
\end{proof}

We make critical use of the classical Gallai-Edmonds decomposition
of graphs. 
\begin{definition}[Gallai-Edmonds decomposition]\label{def:gallai_edmonds_decomposition}
  The Gallai-Edmonds decomposition of a graph \(G\) is a partition
  of its vertex set \(V(G)\) as \(V(G)=O\uplus{I}\uplus{P}\)
  where:
  \begin{itemize}
  \item
    \(O = \{v\in{V(G)}\;;\; \text{ some \emph{maximum} matching of
    } G \text{ leaves } v \text{ exposed}\}\)
  \item \(I = N(O)\)
  \item \(P = V(G)\setminus(I\cup{O})\)
  \end{itemize}
\end{definition}

We now list a few of the many useful properties of Gallai-Edmonds
decompositions.
\begin{theorem}\textup{\textbf{\cite{LovaszPlummerBook2009,Gallai1964,Gallai1963,Edmonds1965}}}\label{thm:gallai_edmonds_properties}
  The Gallai-Edmonds decomposition of a graph \(G\) is unique, and
  can be computed in polynomial time in the size of \(G\). Let
  \(V(G)=O\uplus{I}\uplus{P}\) be the Gallai-Edmonds decomposition
  of \(G\). Then the following hold:
\begin{enumerate}
\item Every component of the induced subgraph \(G[O]\) is
  factor-critical.
\item A matching \(M\) in graph \(G\) is a \emph{maximum} matching
  of \(G\) if and only if:
\begin{enumerate}
\item For each connected component \(H\) of the induced subgraph
  \(G[O]\), the edge set \(M\cap{E(H)}\) forms a \emph{near
    perfect} matching of \(H\);
\item For each vertex \(i\in{I}\) there exists some vertex
  \(i\in{O}\) such that \(\{i,o\}\in{M}\), and;
\item The edge set \(M\cap{E(G[P])}\) forms a \emph{perfect}
  matching of the induced subgraph \(G[P]\).
\end {enumerate}
\item In particular: Any maximum matching \(M\) of \(G\) is a
  disjoint union of (i) a perfect matching of \(G[P]\), (ii)
  near-perfect matchings of each component of \(G[O]\), and (iii)
  an edge from each vertex in \(I\) to a distinct component of
  \(G[O]\).
\item The Stability Lemma: For a vertex \(v\in{V(G)}\) let
  \(G-v={G[V(G)\setminus{v}]}\). Let \(O(H),I(H),P(H)\) denote the
  three parts in the Gallai-Edmonds decomposition of a graph
  \(H\).
  \begin{itemize}
  \item Let \(v\in{O}\). Then
    \(O(G-v)\subseteq(O\setminus\{v\})\), \(I(G-v)\subseteq{I}\),
    and \(P(G-v)\supseteq{P}\).
  \item Let \(v\in{I}\). Then \(O(G-v)=O\),
    \(I(G-v)=(I\setminus\{v\})\), and \(P(G-v)=P\).
  \item Let \(v\in{P}\). Then \(O(G-v)\supseteq{O}\),
    \(I(G-v)\supseteq{I}\), and
    \(P(G-v)\subseteq(P\setminus\{v\})\).
  \end{itemize}
\end {enumerate}
\end{theorem}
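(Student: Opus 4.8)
Because \autoref{thm:gallai_edmonds_properties} is a classical result, my plan is to reconstruct the standard arguments (as in Lov\'{a}sz and Plummer~\cite{LovaszPlummerBook2009}) rather than to do anything new, organising everything around one lemma of Gallai. Uniqueness is essentially free: the set \(O\) is defined purely in terms of which vertices of \(G\) are left exposed by \emph{some} maximum matching, so \(O\), and hence \(I=N(O)\) and \(P=V(G)\setminus(O\cup I)\), depend on \(G\) alone. Polynomial-time computability also reduces to matching: compute \(MM(G)\) once (Edmonds' algorithm), and then decide \(v\in O\) by checking whether \(MM(G-v)=MM(G)\). So the substance is items 1--4.

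The engine is \emph{Gallai's Lemma}: if \(H\) is connected and every vertex of \(H\) is exposed by some maximum matching of \(H\), then \(H\) is factor-critical. I would prove this with a ``closest exposed pair'' argument. Among all pairs \((M,\{a,b\})\) where \(M\) is a maximum matching of \(H\) and \(a\neq b\) are both \(M\)-exposed, pick one minimising \(\mathrm{dist}_{H}(a,b)\). If this distance is \(1\) the edge \(\{a,b\}\) augments \(M\), a contradiction; if it is at least \(2\), take an internal vertex \(c\) of a shortest \(a\)--\(b\) path (so \(c\) is \(M\)-saturated), choose a maximum matching \(N\) exposing \(c\), and flip \(M\) along the component of \(M\triangle N\) through \(c\). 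Since \(M\) and \(N\) are both maximum this component is an even path ending at an \(M\)-exposed vertex \(c^{\ast}\), so the flipped matching is maximum and exposes exactly \((\mathrm{exposed}(M)\setminus\{c^{\ast}\})\cup\{c\}\); in particular it exposes \(c\) together with whichever of \(a,b\) differs from \(c^{\ast}\), a pair strictly closer than \(\{a,b\}\) --- contradiction. Hence no maximum matching of \(H\) exposes two vertices, which together with the hypothesis says precisely that \(H\) is factor-critical.

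Given Gallai's Lemma, items 1--3 are bookkeeping with symmetric differences of maximum matchings, and the cleanest route I would take is through the Berge--Tutte identity, identifying \(I\) as a \emph{canonical barrier}: for \(S=I\) the odd components of \(G-S\) turn out to be exactly the components of \(G[O]\). Tightness of this barrier forces every maximum matching of \(G\) to saturate \(I\) using only \(I\)--\(O\) edges, to send distinct vertices of \(I\) to distinct components of \(G[O]\), to restrict to a near-perfect matching on each component of \(G[O]\), and to restrict to a \emph{perfect} matching of \(G[P]\); that is items 2 and 3, and in particular \(G[P]\) has a perfect matching. Item 1 then follows from Gallai's Lemma applied component-by-component: each component \(K\) of \(G[O]\) has the property that every one of its vertices is exposed by some maximum matching of \(G\), which by the previous sentence restricts to a near-perfect matching of \(K\); hence \(O(K)=V(K)\) and \(K\) is factor-critical.

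For item 4 (the Stability Lemma) I would argue the three cases by tracking how the exposable vertices change under deletion. The case \(v\in I\) is the clean one: using that the components of \(G[O]\) are factor-critical, the maximum matchings of \(G-v\) are exactly the restrictions of those maximum matchings of \(G\) that match \(v\) into \(G[O]\) (delete the edge at \(v\)), whence \(O(G-v)=O\), \(I(G-v)=I\setminus\{v\}\), \(P(G-v)=P\). For \(v\in O\) and \(v\in P\) the displayed inclusions follow in the same spirit, using that deleting one vertex cannot shrink \(O\) beyond its own component nor enlarge \(I\) except through newly exposed vertices. I expect the genuine obstacle to be not any single step but the circular-looking dependency inside items 1--3: ``\(G[P]\) is perfectly matched'' and ``the components of \(G[O]\) are factor-critical'' each seem to presuppose the global shape of maximum matchings, which in turn seems to presuppose the other. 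The Berge--Tutte / canonical-barrier detour (equivalently, a carefully staged minimal-counterexample induction) is exactly the device that breaks the circle, and setting that bootstrap up correctly is where the care is needed.
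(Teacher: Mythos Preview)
The paper does not prove \autoref{thm:gallai_edmonds_properties} at all: it is stated with citations to Lov\'{a}sz--Plummer, Gallai, and Edmonds and then used as a black box (the very next item is \autoref{cor:gallai_edmonds_more_properties}). So there is no ``paper's own proof'' to compare your proposal against; your reconstruction is gratuitous relative to what the paper actually does, which is simply to invoke the classical theorem.

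Judged on its own terms, your outline is the standard one and is essentially sound. The closest-exposed-pair proof of Gallai's Lemma is correct as you state it: the component of \(M\triangle N\) through \(c\) is an even alternating path whose other endpoint \(c^{\ast}\) is \(M\)-exposed, flipping gives a maximum matching exposing \(c\) and whichever of \(a,b\) differs from \(c^{\ast}\), contradicting minimality. Routing items~1--3 through the Berge--Tutte formula with \(I\) as the canonical barrier is also the textbook approach; the one place to be careful is the step you flag yourself---showing that \(I\) really attains the deficiency (equivalently, that every component of \(G-I\) meeting \(O\) is odd and lies entirely in \(O\)) without presupposing item~1. The usual way out is to first prove, directly from the definition of \(O\) and symmetric-difference arguments, that if \(v\in O\) and \(w\) is a neighbour of \(v\) in the same component of \(G[O]\), then \(w\in O\) as well and some maximum matching of \(G\) exposes \(w\); this localises the ``every vertex exposable'' hypothesis to components of \(G[O]\) and lets Gallai's Lemma fire before you need the barrier. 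Your Stability Lemma sketch for \(v\in I\) is fine; for \(v\in O\) and \(v\in P\) you will need the characterisation in item~2 (already in hand) together with a count of how \(MM\) changes, rather than the informal ``cannot shrink/enlarge'' phrasing.
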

\begin{corollary}\label{cor:gallai_edmonds_more_properties}
  Let \(V(G)=O\uplus{I}\uplus{P}\) be the Gallai-Edmonds
  decomposition of graph \(G\). Then the following hold:
  \begin{enumerate}
  \item If \(I\cup{P}\) is an independent set in \(G\), then
    \(P=\emptyset\).
  \item Let \(v,w\in{O}\) be two vertices which are part of the
    same connected component of the induced subgraph \(G[O]\), and
    let \(G'=G[(V(G)\setminus\{v,w\})]\). Then
    \(MM(G')\leq(MM(G)-1)\).
  \end{enumerate}
\end{corollary}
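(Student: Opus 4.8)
The plan is to establish the two parts separately; both proofs are short, and each relies only on the structural facts about the Gallai--Edmonds decomposition recorded in \autoref{thm:gallai_edmonds_properties}.

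For the first part I would argue as follows. Assume \(I\cup P\) is an independent set in \(G\); then in particular \(P\) is an independent set, so the induced subgraph \(G[P]\) contains no edges at all. On the other hand, \(G\) has a maximum matching \(M\), and by item~2(c) (equivalently, item~3) of \autoref{thm:gallai_edmonds_properties} the set \(M\cap E(G[P])\) is a \emph{perfect} matching of \(G[P]\). An edgeless graph admits a perfect matching only when it has no vertices, so \(P=\emptyset\). This should fit in a few lines.

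For the second part I would avoid tracking how the Gallai--Edmonds parts move under vertex deletions and instead argue by contradiction directly on matching numbers. Since \(G'=G[V(G)\setminus\{v,w\}]\) is obtained from \(G\) by deleting vertices, \(MM(G')\le MM(G)\) holds automatically, so it suffices to rule out \(MM(G')=MM(G)\). Suppose \(MM(G')=MM(G)\) and let \(M'\) be a maximum matching of \(G'\). Every edge of \(G'\) is an edge of \(G\), so \(M'\) is a matching of \(G\) with \(|M'|=MM(G)\), hence a \emph{maximum} matching of \(G\). As \(v,w\notin V(G')\), the matching \(M'\) exposes both \(v\) and \(w\). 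Now let \(H\) be the connected component of \(G[O]\) that contains both \(v\) and \(w\). By item~2(a) of \autoref{thm:gallai_edmonds_properties}, \(M'\cap E(H)\) is a near-perfect matching of \(H\), and therefore \(M'\) itself saturates all but exactly one vertex of \(V(H)\); this contradicts the fact that the two distinct vertices \(v,w\in V(H)\) are both exposed by \(M'\). Hence \(MM(G')\le MM(G)-1\).

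The main (and really quite mild) obstacle I anticipate is simply recognising the right move in the second part: lifting a hypothetical maximum matching of \(G'\) back to \(G\) and invoking the per-component near-perfect matching property, rather than repeatedly applying the Stability Lemma to \(G-v\) and then to \((G-v)-w\) --- the latter route works too but is fiddlier, since deleting a vertex of a factor-critical component can destroy factor-criticality. I would also flag in the writeup that the hypothesis ``\(v\) and \(w\) lie in the \emph{same} component of \(G[O]\)'' is used essentially: for vertices in distinct components of \(G[O]\), a single maximum matching of \(G\) may expose both, and then \(MM(G')\) can equal \(MM(G)\).
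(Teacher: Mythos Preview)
Your proof is correct and uses essentially the same ideas as the paper: Part~1 is identical, and Part~2 hinges on the same structural fact (the near-perfect matching of each component of \(G[O]\) in any maximum matching). The only organisational difference is that the paper deletes \(v\) first and then argues about \(G\setminus\{v\}\) before removing \(w\), whereas you lift a hypothetical maximum matching of \(G'\) straight back to \(G\) and derive the contradiction in one step; your route is, if anything, a little cleaner, though you should say \(M'\) saturates all but \emph{at most} one vertex of \(V(H)\) rather than ``exactly one'' (the remaining vertex could be matched into \(I\)), which does not affect the contradiction.
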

\begin{proof}
  We prove each statement.
  \begin{enumerate}
  \item From the assumption, \(G[P]\) contains no edges. By
    \autoref{thm:gallai_edmonds_properties}, \(G[P]\) has a
    perfect matching. Both of these can hold simultaneously only
    when \(P=\emptyset\).
  \item Let \(C\) be the connected component of \(G[O]\) which
    contains both \(v\) and \(w\). From part (3) of
    \autoref{thm:gallai_edmonds_properties} we get that any
    maximum matching of graph \(G\) which \emph{exposes} vertex
    \(v\) contains a \emph{perfect} matching of the subgraph
    \(C''=C[(V(C)\setminus\{v\})]\). Therefore, every maximum
    matching of \(G\) which survives in
    \(G''=G[(V(G)\setminus\{v\})]\) contains a perfect matching of
    \(C''\).  It follows that if we delete \(w\in{V(C'')}\) as
    well from \(G''\) to get \(G'\), then the matching number
    reduces by at least one, since no perfect matching of \(C''\)
    can survive the deletion of vertex \(w\in{V(C'')}\).
  \end{enumerate}
\end{proof}

When the set \(I\cup{P}\) is independent in a graph of surplus at
least \(2\), we get more properties for the set \(O\):

\begin{lemma}\label{lem:egbipart}
  Let \(G\) be a graph with \(\surplus{G}\geq2\), and let
  \(V(G)=O\uplus{I}\uplus{P}\) be the Gallai-Edmonds decomposition
  of graph \(G\).  If \(I\cup{P}\) is an independent set in \(G\),
  then: 

  \begin{enumerate}
  \item There is at least one vertex \(o\in{O}\) which has at
    least two neighbours in the set \(O\).
  \item Let \(v\in{O}\) be a neighbour of some vertex \(o\in{O}\)% ,
    % and let \(X=(\{v\}\cup(N(\{v\})\cap(O\setminus\{o\})))\)
    .  Let \(G'= G[(V(G)\setminus\{o\})]\), and let
    \(V(G')=O'\uplus{I'}\uplus{P'}\) be the Gallai-Edmonds
    decomposition of graph \(G'\).  Then the induced subgraph
    \(G'[P']\) contains at least one edge% ; indeed,
    % \(X\subseteq{P'}\)
    .
  
  \end{enumerate}
\end{lemma}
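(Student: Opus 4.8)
The plan is to first use the two hypotheses to collapse the situation, and then prove the two parts separately. Since \(I\cup P\) is independent, \autoref{cor:gallai_edmonds_more_properties}(1) gives \(P=\emptyset\), so \(V(G)=O\uplus I\) with \(I=N(O)\) independent, and by \autoref{thm:gallai_edmonds_properties}(1) every connected component of \(G[O]\) is factor-critical, hence connected and of odd order. I would also record at the outset that \(O\neq\emptyset\): otherwise \(I=N(O)=\emptyset\), so \(G\) is the empty graph, which has no non-empty independent set and hence cannot satisfy \(\surplus{G}\geq 2\).

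For part (1), I would argue by contradiction, assuming that no vertex of \(O\) has two neighbours inside \(O\). If \(G[O]\) contained an edge, that edge would lie in a component \(C\) of \(G[O]\); since \(C\) is factor-critical it has odd order, and having at least two vertices it has at least three, whence \(C\) has minimum degree at least \(2\) (a degree-\(1\) vertex with unique neighbour \(x\) would be isolated in \(C-x\), destroying every perfect matching of \(C-x\) and contradicting factor-criticality). Then a vertex of \(C\subseteq O\) would have two neighbours in \(O\), a contradiction; so \(G[O]\) is edgeless. Then \(O\) is a non-empty independent set with \(N(O)=I\), so \(\surplus{O}=|I|-|O|\geq\surplus{G}\geq 2\), that is, \(|I|\geq|O|+2\); but when \(G[O]\) is edgeless \autoref{thm:gallai_edmonds_properties}(3) forces every maximum matching of \(G\) to consist only of edges joining the \(|I|\) vertices of \(I\) to distinct singleton components of \(G[O]\), which injects \(I\) into \(O\) and gives \(|I|\leq|O|\) — a contradiction. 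This establishes part (1).

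For part (2), let \(o,v\in O\) be adjacent and put \(G'=G-o\), with Gallai-Edmonds decomposition \(V(G')=O'\uplus I'\uplus P'\); the goal reduces to showing \(v\in P'\), for then \(P'\neq\emptyset\) and, since \(G'[P']\) has a perfect matching by \autoref{thm:gallai_edmonds_properties}(3), that matching is non-empty and contains an edge. First, \(MM(G')=MM(G)\): some maximum matching of \(G\) exposes \(o\) (as \(o\in O\)) and so survives in \(G-o\), giving \(MM(G')\geq MM(G)\), while the reverse inequality is monotonicity. Now if a maximum matching \(M'\) of \(G'\) exposed \(v\), then \(M'\cup\{\{o,v\}\}\) would be a matching of \(G\) of size \(MM(G')+1=MM(G)+1\), which is impossible; hence \(v\notin O'\). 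Applying the Stability Lemma (\autoref{thm:gallai_edmonds_properties}(4)) to the deleted vertex \(o\in O\) gives \(I'\subseteq I\), and since \(v\in O\) and \(O\cap I=\emptyset\) we get \(v\notin I'\). Therefore \(v\in V(G')\setminus(O'\cup I')=P'\), as required.

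I expect the difficulty here to be a matter of assembling the right small facts rather than any serious obstacle. The two load-bearing observations are that a factor-critical graph on at least three vertices has minimum degree at least \(2\), and the augmenting-edge argument ruling out \(v\) being exposed by a maximum matching of \(G-o\); the remainder is Gallai-Edmonds bookkeeping, where the one easy-to-trip-over point is that \(\surplus{\cdot}\) is taken over \emph{non-empty} independent sets and that it is \(N(O)\), not \(N[O]\), that enters the surplus count.
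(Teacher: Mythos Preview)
Your proof is correct and follows essentially the same route as the paper's. For part~(1) both arguments reduce to the case where \(G[O]\) is edgeless and then derive a contradiction from \(\surplus{G}\geq 2\); you use the matching injection \(|I|\leq|O|\) coming from \autoref{thm:gallai_edmonds_properties}(3), while the paper instead applies the surplus bound to the independent set \(I\) to get \(|O|\geq|I|+2\), but the two contradictions are interchangeable. For part~(2) you prove \(v\in P'\) directly via the Stability Lemma and the augmenting-edge argument, whereas the paper packages the same two ingredients as a proof by contradiction assuming \(P'=\emptyset\); the logic is identical. Your explicit justification that a factor-critical graph on at least three vertices has minimum degree \(2\) is a step the paper leaves implicit, so if anything your write-up is slightly more careful there.
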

\begin{proof} 
  We prove each statement.

  \begin{enumerate}
  \item Each component of \(G[O]\) is factor critical
    (\autoref{thm:gallai_edmonds_properties}), and so has an odd
    number of vertices.  So to prove this part it is enough to
    show that there is at least one component in \(G[O]\) which
    has at least two (and hence, at least three) vertices.
    Suppose to the contrary that each component in \(G[O]\) has
    exactly one vertex. Then the set \(O\) is an independent set
    in graph \(G\).  Since the set \(I\cup{P}\) is independent by
    assumption, we get from
    \autoref{cor:gallai_edmonds_more_properties} that
    \(P=\emptyset\). Thus \(G\) is a \emph{bipartite} graph with
    \(O\) and \(I\) being two parts of the bipartition. In
    particular, \(N(O)\subseteq{I}\) and \(N(I)\subseteq{O}\) both
    hold. But since \(\surplus{G}\geq2\) this implies that both
    \(|I|\geq(|O|+2)\) and \(|O|\geq(|I|+2)\) hold simultaneously,
    which cannot happen. The claim follows.
  \item Observe that by the definition of the set \(O\), we have
    that \(MM(G')=MM(G)\). If \(G'[P']\) contains no edge, then
    the fact that \(G[P']\) has a perfect matching
    (\autoref{thm:gallai_edmonds_properties}) implies that
    \(P'=\emptyset\).  Together with the Stability Lemma (see
    \autoref{thm:gallai_edmonds_properties}) this implies that
    \(v\in{O'}\) in graph \(G'\). Then by the definition of the
    set \(O'\) there is a maximum matching \(MM'\) of \(G'\) which
    exposes the vertex \(v\).  But then the matching \(MM'\)
    together with the edge \(\{o,v\}\) forms a matching in graph
    \(G\) of size \(MM(G')+1=MM(G)+1\), a contradiction. Therefore
    \(G'[P']\) must contain at least one edge% , and the vertex
    % \(v\) must belong to the set \(P'\) in graph \(G'\)
    .
    % Since there are no edges between the sets \(O'\) and \(P'\) by
    % definition, and since \(v\in{P'}\) in graph \(G'\), no vertex
    % in the set \(X\subseteq{N(v)}\) can be in the set \(O'\) in
    % graph \(G'\). By the Stability Lemma, no vertex in the set
    % \(X\subseteq{O}\) can be in the set \(I'\) either. It follows
    % that \(X\subseteq{P'}\).
  \end{enumerate}
\end{proof}

\section{The Algorithm}\label{sec:algorithm}
In this section we describe our algorithm which solves \VCALP in
\(\OhStar(3^{\hat{k}})\) time. We start with an overview of the
algorithm. We then state the reduction and branching rules which
we use, and prove their correctness. We conclude the section by
proving that the algorithm correctly solves \VCALP and that it
runs within the stated running time.

\subsection{Overview}\label{sec:overview}
% We first set up some notation which we use in this section. We use
% \(LP(G)\) to denote the minimum value of a solution to the linear
% programming relaxation of \VC for a graph \(G\), \(MM(G)\) to
% denote the matching number of \(G\), and \(OPT(G)\) to denote the
% vertex cover number of \(G\). 
The \VCALP problem can be stated as:

\defparproblem{\VCALP(\vcalp)}%
{A graph \(G\) %, the numbers \(LP(G)\) and \(MM(G)\),
  and \(\hat{k}\in{}\mathbb{N}\).}%
{\(\hat{k}\)}%
{Let \(k=(2LP(G)-MM(G))+\hat{k}\). Is \(OPT(G)\leq{}k\)?}

At its core the algorithm is a simple branching algorithm. Indeed,
we employ only two basic branching steps which are both
``natural'' for \VC: We either branch on the two end-points of an
edge, or on a vertex \(v\) and a pair \(u,w\) of its neighbours.
We set \(\hat{k}\) % \(\hat{k}=(k+MM(G)-2LP(G))\)
as the \emph{measure} for analyzing the running time of the
algorithm.  We ensure that this measure decreases by \(1\) in each
branch of a branching step in the algorithm. Since
\(OPT(G)\geq(2LP(G)-MM(G))\) for every graph \(G\)---See
\autoref{lem:lower_bound}---and \(k\geq{}OPT(G)\) holds---by
definition---for a \YES instance, it follows that
\(\hat{k}=(k+MM(G)-2LP(G))\) is never negative for a \YES
instance. Hence we can safely terminate the branching at depth
\(\hat{k}\). Since the worst-case branching factor is three, we
get that the algorithm solves \vcalp in time
\(\OhStar(3^{\hat{k}})\).

\begin{algorithm}
  \caption{The \(\OhStar(3^{\hat{k}})\) \FPT algorithm for
    \VCALP.}\label{algorithm}
\begin{algorithmic}[1]
\Function{VCAL-P}{$(G,\hat{k},reduce)$}\Comment{Initially invoked with \(reduce==True\).}
  \State \(k\gets (2LP(G)-MM(G))+\hat{k}\)
  \If{\(reduce==True\)}\Comment{Do the preprocessing only if \(reduce\) is \texttt{True}.} 
  \While{At least one of Reduction Rules 1, 2, and 3 applies to \((G,k)\)} 
    \State Apply the \emph{first} such rule to \((G,k)\), to get \((G',k')\). 
    \State \(G\gets{}G',k\gets{}k'\)
  \EndWhile
  \Statex
   %\State Exhaustively apply Reduction Rule 1 to \((G,k)\) to get \((G',k')\).
   %\State Exhaustively apply Reduction Rule 2 to \((G',k')\) to get \((G'',k'')\).
   %\State Exhaustively apply Reduction Rule 3 to \((G'',k'')\) to get \((G''',k''')\).
   %\State \(G\gets{}G''',k\gets{}k'''\)\Comment{Renaming for brevity}
   \State \(G\gets{}G',k\gets{}k'\)\Comment{Renaming for brevity}
   \State \(\hat{k}\gets k+MM(G)-2LP(G)\)\Comment{Preprocessing ends.}
  \EndIf 
  \Statex 
  \If{\(\hat{k}<0\)}\Comment{Check for the base cases.}
    \State \textbf{return} \texttt{False}
  \ElsIf{\(G\) has no vertices}
    \State \textbf{return} \texttt{True}
  \EndIf
  \Statex
  \State Compute the Gallai-Edmonds decomposition \(V(G)=O\uplus{}I\uplus{}P\) of \(G\).
  \Statex
  \If {\(G[I\cup{}P]\) contains at least one edge
    \(\{u,v\}\)}\Comment{Branch on edge \(\{u,v\}\).}
    \State \(G_{1}\gets{}(G\setminus{}u),G_{2}\gets{}(G\setminus{}v)\)    
    \State \(k\gets{}(k-1)\)
    \State \(\hat{k}_{1}=(k+MM(G_{1})-2LP(G_{1})),\hat{k}_{2}=(k+MM(G_{2})-2LP(G_{2}))\)
    \State \textbf{return} (VCAL-P(\(G_{1},\hat{k}_{1},True\)) \(\vee\) VCAL-P(\(G_{2},\hat{k}_{2},True\)))
  \Statex
  %\ElsIf {\(I\neq\emptyset\)}\Comment{Since \(I\cup{}P\) is independent, \(P=\emptyset\).} 
  %  \State Choose an edge \(\{i,o\}\;;\;i\in{}I,o\in{}O\) where
  %  the component of \(o\) in \(G[O]\) contains at least one other
  %  vertex.\Comment{Such an edge must exist.}
  %  \State \(G_{1}\gets{}(G\setminus{}i),G_{2}\gets{}(G\setminus{}o)\)    
  %  \State \(k\gets{}(k-1)\)
  %  \State \(\hat{k}_{1}=(k+MM(G_{1})-2LP(G_{1})),\hat{k}_{2}=(k+MM(G_{2})-2LP(G_{2}))\)
  %  \State \textbf{return} (VCAL-P(\(G_{1},\hat{k}_{1},True\)) \(\vee\) VCAL-P(\(G_{2},\hat{k}_{2},False\)))
  %\Statex
  \Else\Comment{Since \(I\cup{}P\) is independent, \(P=\emptyset\).}
    \Statex
    \State Choose a vertex \(u\in{O}\) and two of its neighbours \(v,w\in{O}\). 
    \Statex \Comment{Such vertices must exist.}
    \State \(G_{1}\gets{}(G\setminus{}u)\)\Comment{Pick \(u\) in the solution.}
    \State \(k\gets{}(k-1)\)
    \State \(\hat{k}_{1}=(k+MM(G_{1})-2LP(G_{1}))\),
    \If
    {VCAL-P(\(G_{1},\hat{k}_{1},False\))\(==\)\texttt{True}}\Comment{No reduction rule will be applied now.}
      \State \textbf{return} \texttt{True}
     \Else\Comment{Pick both \(v\) and \(w\) in the solution.}
       \State \(G_{2}\gets{}(G\setminus{}\{v,w\})\)    
       \State \(k\gets{}(k-1)\)\Comment{Effectively reducing \(k\) by two.}
       \State \(\hat{k}_{2}=(k+MM(G_{2})-2LP(G_{2}))\)
       \State \textbf{return} VCAL-P(\(G_{2},\hat{k}_{2},True\))
    \EndIf
  \EndIf
\EndFunction
\end{algorithmic}
\end{algorithm}
%\todo{We may have to reapply starting from rule 1 after applying each of rules 2 and 3.}
%%% Local Variables: 
%%% mode: latex
%%% TeX-master: "main"
%%% End: 

% The \textbf{while} in algorithm \ref{euclid} ends in line
% \ref{euclidendwhile}, so \algref{euclid}{euclidendwhile} is the
% line we seek.

% \todo[inline]{Add a three-line outline of the algorithm as
%   pseudocode in the Introduction. }

The algorithm---see \autoref{algorithm} on
page~\pageref{algorithm}---modifies the input graph \(G\) in
various ways as it runs. In our description of the algorithm we
will sometimes, for the sake of brevity, slightly abuse the
notation and use \(G\) to refer to the ``current'' graph at each
point in the algorithm. Since the intended meaning will be clear
from the context, this should not cause any confusion.

The crux of the algorithm is the manner in which it finds in \(G\)
a small structure---such as an edge---on which to branch such that
the measure \(\hat{k}\) drops on each branch. Clearly, picking an
arbitrary edge---say---and branching on its end-points will not
suffice: this will certainly reduce the budget \(k\) by one on
each branch, but we will have no guarantees on how the values
\(MM(G)\) and \(LP(G)\) change, and so we cannot be sure that
\(\hat{k}\) drops. Instead, we apply a two-pronged strategy to
find a small structure in the graph which gives us enough control
over how the values \(k,MM(G),LP(G)\) change, in such a way as to
ensure that \(\hat{k}=(k+MM(G)-2LP(G))\) drops in each branch.

First, we employ a set of three reduction rules which give us
control over how \(LP(G)\) changes when we pick a vertex into the
solution and delete it from \(G\). These rules have the following
nice properties:
\begin{itemize}
\item Each rule is sound---see \autoref{sec:redrules}---and can be
  applied in polynomial time;
\item Applying any of these rules does not increase the measure
  \(\hat{k}\), and;
\item If none of these rules applies to \(G\), then we can delete
  up to two vertices from \(G\) with an assured drop of \(0.5\)
  \emph{per deleted vertex} in the value of \(LP(G)\).
  % deleting any
  % one vertex from \(G\) results in a graph \(G'\) such that
  % \(LP(G')=LP(G)-0.5\), and deleting any \emph{two} vertices from
  % \(G\) results in a graph \(G'\) such that \(LP(G')=LP(G)-1\).
\end{itemize}
In this we follow the approach of Narayanaswamy et
al.~\cite{NarayanaswamyRamanRamanujanSaurabh2012} who came up with
this strategy to find a small structure on which to branch so that
their measure---which was \((k-LP)\)---would reduce on each
branch. Indeed, we \emph{reuse} three of their reduction rules,
after proving that these rules do not increase our measure
\(\hat{k}\).

While these rules give us control over how \(LP(G)\) changes when
we delete a vertex, they do not give us such control over
\(MM(G)\). That is: let \(G\) be a graph to which none of these
rules applies and let \(v\) be a vertex of \(G\). Picking \(v\)
into a solution and deleting it from \(G\) to get \(G'\) would (i)
reduce \(k\) by \(1\), and (ii) make \(LP(G')=LP(G)-0.5\), but
will \emph{not} result in a drop in \(\hat{k}\) \emph{if it so
  happens} that the matching number of \(G'\) is the same as that
of \(G\). Thus for our approach to succeed we need to be able to
consistently find, in a graph reduced with respect to the
reduction rules, a small structure---say, an edge---on which to
branch, such that deleting a small set of vertices from this
structure would reduce the matching number of the graph by at
least \(1\).

The search for such a structure led us to the second and novel
part of our approach, namely the use of the classical
\emph{Gallai-Edmonds decomposition} of graphs
(\autoref{def:gallai_edmonds_decomposition}). We found that by
first reducing the input graph with respect to the three reduction
rules and then carefully choosing edges to branch based on the
Gallai-Edmonds decomposition of the resulting graph, we could make
the measure \(\hat{k}\) drop on every branch. We describe the
reduction and branching rules in the next two subsections.

\subsection{The Reduction Rules}\label{sec:redrules}
Given an instance \((G,\hat{k})\) of \VCALP our algorithm first
computes the number \(k=(2LP(G)-MM(G)+\hat{k})\) so that \((G,k)\)
is the equivalent instance of (classical) \VC% ---see line~2 of
% \autoref{algorithm}
.  Each of our three reduction rules takes an instance
\((G=(V,E),k)\) of \VC, runs in polynomial time, and outputs an
instance \((G'=(V',E'),k')\) of \VC. We say that a reduction rule
is \emph{sound} if it always outputs an \emph{equivalent}
instance. That is, if it is always the case that \(G\) has a
vertex cover of size at most \(k\) if and only if \(G'\) has a
vertex cover of size at most \(k'\). We say that a reduction rule
is \emph{safe} if it never increases the measure \(\hat{k}\). That
is, if it is always the case that
\(k'+MM(G')-2LP(G')\leq{}k+MM(G)-2LP(G)\).

In the algorithm we apply these reduction rules
\emph{exhaustively}, in the order they are presented. That is, we
take the input instance \((G,k)\) and apply the first among
\autoref{red:NT_reduction}, \autoref{red:edgeinneighbour}, and
\autoref{red:struction} which \emph{applies}---see definitions
below---to \((G,k)\) to obtain a modified instance \((G',k')\). We
now set \(G\gets{G'},k\gets{k'}\) and repeat this procedure, till
none of the rules applies to the instance \((G,k)\). We say that
such an instance is \emph{reduced} with respect to our reduction
rules.  The point of these reduction rules is that they help us
push the \emph{surplus} of the input graph to at least \(2\) in
polynomial time, while not increasing the measure \(\hat{k}\).

% That is, if
% \autoref{red:NT_reduction} applies to the input instance \((G,k)\)
% then we apply the rule to get an instance \((G',k')\) to which the
% rule no longer applies. Then if \autoref{red:edgeinneighbour} applies
% to \((G',k')\) then we apply the rule to get an instance
% \((G_{1},k_{1})\). If \autoref{red:edgeinneighbour} applies
% to \((G_{1},k_{1})\) then we apply the rule to get an instance
% \((G_{2},k_{2})\), and so on till we get an instance \((G'',k'')\)
% to which the rule no longer applies. We repeat this procedure with
% \autoref{red:struction} and . 

\begin{reduction}\label{red:NT_reduction} Compute an optimal solution
  \(x\) to LPVC(\(G\)) such that all-\(\frac{1}{2}\) is the unique
  optimum solution to LPVC(\(G[V^{x}_{1/2}]\)). Set
  \(G'=G[V^{x}_{1/2}],k'=k-|V^{x}_{1}|\).
\end{reduction}
\noindent \autoref{red:NT_reduction} \emph{applies} to \((G,k)\)
if and only if all-\(\frac{1}{2}\) is \emph{not} the unique
solution to LPVC(\(G\)).

\begin{reduction}\label{red:edgeinneighbour} 
  % If \autoref{red:NT_reduction} does \emph{not} apply to
  % \((G,k)\), then do the following. 
  If there is an independent set $Z\subseteq{}V(G)$ such that
  \(\surplus{Z}=1\) and \(N(Z)\) is \emph{not} an independent set
  in \(G\), then set \(G'=G\setminus{(Z\cup{}N(Z))},k'=k-|N(Z)|\).
\end{reduction}
\noindent \autoref{red:edgeinneighbour} \emph{applies} to
\((G,k)\) if and only if (i) \autoref{red:NT_reduction} does
\emph{not} apply, and (ii) \(G\) has a vertex subset \(Z\) with
the properties stated in \autoref{red:edgeinneighbour}.

\begin{reduction} \label{red:struction} % If neither of
  % \autoref{red:NT_reduction} and \autoref{red:edgeinneighbour}
  % applies to \((G,k)\), then do the following.
  If there is an independent set $Z\subseteq{}V(G)$ such that
  \(\surplus{Z}=1\) and \(N(Z)\) \emph{is} an independent set in
  \(G\), then remove $Z$ from \(G\) and \emph{identify} the
  vertices of $N(Z)$---that is, delete all of \(N(Z)\), add a new
  vertex \(z\), and make \(z\) adjacent to all vertices of the set
  \((N(N(Z))\setminus{Z})\)---to get \(G'\), and set \(k'=k-|Z|\).
\end{reduction}
\noindent \autoref{red:struction} \emph{applies} to \((G,k)\) if
and only if (i) neither of the previous rules applies, and (ii)
\(G\) has a vertex subset \(Z\) with the properties stated in
\autoref{red:struction}.

All our reduction rules are due to Narayanaswamy et
al.~\cite[Preprocessing Rules 1 and
2]{NarayanaswamyRamanRamanujanSaurabh2012}. The soundness of these
rules and their running-time bounds also follow directly from
their work. We need to argue, however, that these rules are safe
for our measure \(\hat{k}\).

\begin{lemma}\label{lem:rules_sound_fast}\textup{\textbf{\cite{NarayanaswamyRamanRamanujanSaurabh2012}}}
  All the three reduction rules are sound, and each can be applied
  in time polynomial in the size of the input \((G,k)\).
\end{lemma}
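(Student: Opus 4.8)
The plan is to prove soundness and the polynomial running-time bound for each of the three rules separately, following Narayanaswamy et al.\ but spelled out for our bookkeeping. For \autoref{red:NT_reduction} I would simply invoke the Nemhauser--Trotter theorem: for the half-integral optimum \(x\) produced (which, as recalled in \autoref{sec:preliminaries}, is computable in polynomial time), there is a minimum vertex cover of \(G\) that contains all of \(V^{x}_{1}\) and avoids all of \(V^{x}_{0}\). Hence \(G\) has a vertex cover of size at most \(k\) if and only if \(G[V^{x}_{1/2}]\) has one of size at most \(k-|V^{x}_{1}|\), which is exactly soundness; the running time is the standard Nemhauser--Trotter construction.

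For \autoref{red:edgeinneighbour} and \autoref{red:struction} I would first record that, since \autoref{red:NT_reduction} no longer applies, \(\surplus{G}\geq 1\) by \autoref{lem:all-halves_positive_surplus}; consequently every nonempty \(Z'\subseteq{}Z\) (all such \(Z'\) are independent, as \(Z\) is) satisfies \(|N(Z')|\geq|Z'|+1\), so Hall's Theorem gives a matching saturating \(Z\) in the bipartite graph between \(Z\) and \(N(Z)\). Using this I would show, for \autoref{red:edgeinneighbour}, that every vertex cover \(C\) of \(G[Z\cup N(Z)]\) has size at least \(|N(Z)|=|Z|+1\): if \(Z\setminus{}C=\emptyset\) then \(C\supseteq{}Z\) but the assumed edge inside \(N(Z)\) is uncovered, a contradiction; otherwise \(N(Z\setminus{}C)\subseteq{}C\cap N(Z)\), and the surplus bound forces \(|C\cap N(Z)|\geq|Z\setminus{}C|+1\), hence \(|C|\geq|Z|+1\). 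Then the exchange \(S\mapsto (S\setminus{}Z)\cup N(Z)\) turns any minimum vertex cover \(S\) of \(G\) into a vertex cover of no larger size (using \(|S\cap(Z\cup N(Z))|\geq|N(Z)|\)) that contains all of \(N(Z)\); since deleting \(N(Z)\) isolates \(Z\), this yields the equivalence \((G,k)\equiv(G\setminus(Z\cup N(Z)),\,k-|N(Z)|)\) in both directions.

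For \autoref{red:struction} the set \(N(Z)\) is independent, so \(G[Z\cup N(Z)]\) is bipartite with minimum vertex cover exactly \(|Z|\); here I would argue that some minimum vertex cover \(S\) of \(G\) contains either all of \(Z\) or all of \(N(Z)\), and then track the identification. If \(N(Z)\subseteq{}S\), then \((S\setminus(Z\cup N(Z)))\cup\{z\}\) covers \(G'\) and has size at most \(|S|-|N(Z)|+1=|S|-|Z|\); if \(Z\subseteq{}S\), then \(S\setminus{}Z\) covers \(G'\) and has size \(|S|-|Z|\) (every edge \(\{z,w\}\) of \(G'\) arises from an edge from \(w\) to \(N(Z)\), which \(S\) covers through a vertex outside \(N(Z)\), i.e.\ through \(w\)). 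Conversely a cover \(C'\) of \(G'\) of size at most \(k-|Z|\) lifts back by taking \(C'\cup Z\) if \(z\notin{}C'\) and \((C'\setminus\{z\})\cup N(Z)\) if \(z\in{}C'\). Verifying that each set constructed is genuinely a vertex cover amounts to a short check over the three edge types---edges disjoint from \(Z\cup N(Z)\), edges between \(Z\) and \(N(Z)\), and edges incident to \(z\)---and gives \((G,k)\equiv(G',k-|Z|)\).

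For the running time, \autoref{red:NT_reduction} is polynomial as noted, and for the other two rules the only nontrivial subroutine is finding an independent set of surplus exactly \(1\) or else certifying \(\surplus{G}\geq 2\); this is a minimum-surplus (Hall-violator) computation on the bipartite incidence structure of \(G\) and reduces to a single maximum-flow/bipartite-matching computation, while testing independence of \(N(Z)\) and performing the deletion or the identification step are trivially polynomial. The step I expect to require the most care is the soundness of \autoref{red:struction}: pinning down the ``without loss of generality \(S\) contains all of \(Z\) or all of \(N(Z)\)'' normalization and then carrying the \(+1\) contributed by the new vertex \(z\) correctly through both the forward and the backward directions without an off-by-one error.
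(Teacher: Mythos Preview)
The paper itself does not prove this lemma at all: it is stated with a citation to Narayanaswamy et al.\ and the surrounding text explicitly says that soundness and the running-time bounds ``follow directly from their work.'' So there is no paper proof to compare against; your proposal supplies what the paper deliberately outsources. Your outline follows the standard Narayanaswamy--Raman--Ramanujan--Saurabh arguments and is essentially correct, but two points deserve tightening.

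First, in your treatment of \autoref{red:struction}, the case ``\(Z\subseteq S\)'' is not yet watertight. You write that \(S\setminus Z\) covers \(G'\) because every edge \(\{z,w\}\) of \(G'\) comes from some \(\{u,w\}\) with \(u\in N(Z)\), ``which \(S\) covers through a vertex outside \(N(Z)\), i.e.\ through \(w\)''. That inference needs \(S\cap N(Z)=\emptyset\), which does not follow from \(Z\subseteq S\) alone (and if \(S\cap N(Z)\neq\emptyset\) then \(S\setminus Z\) even contains vertices that do not exist in \(G'\)). The fix is exactly the normalization you flag as delicate: if \(Z\subseteq S\) and \(S\cap N(Z)\neq\emptyset\), then \(|S\cap(Z\cup N(Z))|\geq |Z|+1=|N(Z)|\), so \(S'=(S\setminus Z)\cup N(Z)\) is a minimum vertex cover with \(N(Z)\subseteq S'\), and you are back in the other case. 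Equivalently, sharpen the dichotomy to ``\(N(Z)\subseteq S\), or \(Z\subseteq S\) and \(S\cap N(Z)=\emptyset\)''. With that, both sub-cases go through cleanly.

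Second, your running-time sketch for Rules~2 and~3 (``reduces to a single maximum-flow/bipartite-matching computation'') is a little vague. The concrete procedure, as in Narayanaswamy et al., is: for each vertex \(v\), solve \(LPVC(G)\) with the extra constraint \(x_v=0\) (equivalently \(x_v=1\) on a neighbour) and check whether the optimum rises by exactly \(\tfrac12\); if so, the resulting \(V^{x}_{0}\) is an independent set of surplus \(1\). This is \(|V(G)|\) half-integral LPs, each solvable via bipartite matching in polynomial time. Saying this explicitly would complete the polynomial-time claim.
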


It remains to show that none of these reduction rules increases
the measure \(\hat{k}=k+MM(G)-2LP(G)\). That is, let
\((G=(V,E),k)\) be an instance of \VC to which one of these rules
applies, and let \((G'=(V',E'),k')\) be the instance obtained by
applying the rule to \((G,k)\). Then we have to show, for each
rule, that \(\hat{k}'=k'+MM(G')-2LP(G')\leq{}\hat{k}\) holds. We
establish this by examining how each rule changes the values
\(k\), \(MM\) and \(LP\). In each case, let \(x\) be an optimum
half-integral solution to \(LPVC(G)\) such that
all-\(\frac{1}{2}\) is the unique optimum solution to
\(LPVC(G[V^{x}_{1/2}])\), and let \(LP(G)\) be the (optimum) value
of this solution. Also, let \(x'\) be an optimum half-integral
solution to \(LPVC(G')\), and let \(LP(G')\) be the value of this
solution.

\begin{lemma}\label{lem_rule_one_safe}
\autoref{red:NT_reduction} is safe.
\end{lemma}
\begin{proof}
  From the definition of the rule we get that
  \(k'=k-|V^{x}_{1}|\). Now since \(x'\equiv\frac{1}{2}\) is the
  unique optimum solution to \(LPVC(G')\), and
  \(G'=G[V^{x}_{1/2}]\), we get that \(LP(G')=LP(G)-|V^{x}_{1}|\),
  and hence that \(2LP(G')=2LP(G)-2|V^{x}_{1}|\). From
  \autoref{lem:bimatch} and the construction of the graph \(G'\)
  we get that
  \(MM(G)\geq% {}MM(G[V^{x}_{1/2}])+|V^{x}_{1}|=
  MM(G')+|V^{x}_{1}|\),
  and hence that \(MM(G')\leq{}MM(G)-|V^{x}_{1}|\). Putting these
  together, we get that \(\hat{k}'\leq\hat{k}\).
\end{proof}
% A graph on which \autoref{red:NT_reduction} does not apply has the
% following useful property.
% \begin{lemma}\label{lem:all_halves_surplus_positive}\textup{\textbf{\cite{NarayanaswamyRamanRamanujanSaurabh2012}}}
%   If \autoref{red:NT_reduction} does not apply to \((G,k)\) then
%   \(\surplus{G}\geq{}1\).
% \end{lemma}
\begin{lemma}\label{lem_rule_two_safe}
\autoref{red:edgeinneighbour} is safe.
\end{lemma}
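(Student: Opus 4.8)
The plan is to track exactly how the three quantities $k$, $LP$, and $MM$ change under the rule. Since the rule sets $k' = k - |N(Z)|$, the inequality $\hat{k}' \le \hat{k}$ to be proved is equivalent to
\[ 2\bigl(LP(G) - LP(G')\bigr) \;\le\; |N(Z)| + \bigl(MM(G) - MM(G')\bigr). \]
First I would record the standing assumptions. The rule applies only when \autoref{red:NT_reduction} does not, so all-$\tfrac12$ is the \emph{unique} optimum of $LPVC(G)$; hence $LP(G) = |V(G)|/2$ and, by \autoref{lem:all-halves_positive_surplus}, $\surplus{G} \ge 1$ (and in fact $\surplus{G}=1$ since $Z$ witnesses surplus $1$). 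Also $N_G(Z) = N(Z)$ is disjoint from $V(G') = V(G)\setminus(Z\cup N(Z))$, and $|V(G')| = |V(G)| - (2|N(Z)| - 1)$ since $|Z| = |N(Z)| - 1$.

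The crux is to pin down $LP(G')$ exactly. I would first establish the two bounds $LP(G) - |N(Z)| \le LP(G') \le LP(G) - |N(Z)| + \tfrac12$. The upper bound is just the value of the all-$\tfrac12$ assignment on $G'$, using the vertex count above. For the lower bound, take any optimal half-integral solution $x'$ of $LPVC(G')$ and extend it to $G$ by assigning $0$ to every vertex of $Z$ and $1$ to every vertex of $N(Z)$; feasibility is immediate (there are no $G$-edges from $Z$ to $V(G')$, every edge meeting $N(Z)$ is covered by its value-$1$ endpoint, and all remaining edges lie inside $G'$), and this solution has value $w(x') + |N(Z)|$, so $LP(G) \le LP(G') + |N(Z)|$. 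By half-integrality $LP(G')$ is a multiple of $\tfrac12$, so it equals one of the two endpoints of that interval. The lower endpoint $LP(G') = LP(G) - |N(Z)|$ is impossible: it would make the extension an optimum solution of $LPVC(G)$ different from all-$\tfrac12$ (it assigns $0$ to the nonempty set $Z$), contradicting uniqueness. Hence $LP(G') = LP(G) - |N(Z)| + \tfrac12$, i.e. $2\bigl(LP(G) - LP(G')\bigr) = 2|N(Z)| - 1$.

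It then remains to show $MM(G) - MM(G') \ge |N(Z)| - 1$. Consider the bipartite graph $H$ on $Z \uplus N(Z)$ consisting of the $G$-edges between $Z$ and $N(Z)$. For any $A \subseteq Z$ we have $N_H(A) = N_G(A)$, and since $A$ is independent, $|N_G(A)| \ge |A| + \surplus{G} \ge |A|$; so Hall's theorem yields a matching $M_H$ of $H$ saturating $Z$, of size $|Z| = |N(Z)| - 1$. As $Z\cup N(Z)$ is disjoint from $V(G')$, adjoining $M_H$ to a maximum matching of $G'$ gives a matching of $G$, whence $MM(G) \ge MM(G') + |N(Z)| - 1$. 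Plugging into the target inequality, $2\bigl(LP(G)-LP(G')\bigr) = 2|N(Z)|-1 = |N(Z)| + (|N(Z)|-1) \le |N(Z)| + \bigl(MM(G)-MM(G')\bigr)$, which is exactly what we need, so $\hat{k}' \le \hat{k}$.

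The one genuinely delicate step is ruling out $LP(G') = LP(G) - |N(Z)|$: this is precisely where the hypothesis that \autoref{red:NT_reduction} is inapplicable (equivalently, uniqueness of the all-$\tfrac12$ optimum) is used, and it is the only place structural input is needed — notably, the matching bound does not use that $N(Z)$ is non-independent (that feature is needed only for soundness of the rule). Everything else is routine bookkeeping with vertex counts and Hall's theorem.
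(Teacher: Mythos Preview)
Your proof is correct and follows the same overall structure as the paper's: bound the changes in \(k\), \(LP\), and \(MM\) separately, then combine. The matching bound \(MM(G)-MM(G')\geq|Z|\) via Hall's theorem is essentially identical to the paper's. The one place you differ is in establishing \(LP(G')\geq LP(G)-|N(Z)|+\tfrac12\): the paper decomposes an optimal half-integral solution \(x'\) of \(LPVC(G')\) and applies \(\surplus{G}\geq 1\) to the independent set \(Z\cup V_{0}^{x'}\) in \(G\), whereas you sandwich \(LP(G')\) in an interval of length \(\tfrac12\), invoke half-integrality to reduce to two candidate values, and rule out the lower one via uniqueness of the all-\(\tfrac12\) optimum. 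Your route is a touch cleaner since it avoids dissecting \(x'\), and it incidentally pins down \(LP(G')\) exactly; the paper's argument is more computational but equally short. Your closing observation that the matching bound does not use the non-independence of \(N(Z)\) is also correct---that hypothesis is needed only for soundness, not safety.
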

\begin{proof}
  From the definition of the rule we get that \(k'=k-|N(Z)|\). We
  bound the other two quantities.
  \begin{claim}
    \(LP(G')\geq{}LP(G)-|N(Z)|+\frac{1}{2}\).
  \end{claim}
  \begin{proof}[Proof of the claim.]
    Since \autoref{red:NT_reduction} did not apply to \((G,k)\) we
    know that all-\(\frac{1}{2}\) is the unique optimum solution
    to \(LPVC(G)\). It follows from this and the construction of
    graph \(G'\) that
    \(LP(G')=\sum_{u\in{}V'}x'(u)=LP(G)-\frac{1}{2}(|Z|+|N(Z)|)+\frac{1}{2}(|V^{x'}_{1}|-|V^{x'}_{0}|)\).
    Adding and subtracting \(\frac{1}{2}(|N(Z)|)\), we get
    \(LP(G')=LP(G)-|N(Z)|+\frac{1}{2}(|N(Z)|-|Z|)+\frac{1}{2}(|V^{x'}_{1}|-|V^{x'}_{0}|)
    =LP(G)-|N(Z)|+\frac{1}{2}(|N(Z)|+|V^{x'}_{1}|)-\frac{1}{2}(|Z|+|V^{x'}_{0}|)\).
    Now since \(V(G')=(V(G)\setminus{}(Z\cup{}N(Z)))\) and
    \(V^{x'}_{0}\subseteq{}V(G')\), we get that
    \(Z\cup{}V^{x'}_{0}\) is an independent set in \(G\), and that
    \(N(Z\cup{}V^{x'}_0)=N(Z)\cup{}V^{x'}_{1}\) in \(G\) (Recall
    that \(N(V^{x}_{0})=V^{x}_{1}\) for any half-integral optimal
    solution \(x\).).  Since
    \(\surplus{G}\geq{}1\)---\autoref{lem:all-halves_positive_surplus}---we
    get that in \(G\),
    \(|N(Z\cup{}V^{x'}_{0})|-|Z\cup{}V^{x'}_{0}|\geq{}1\), which
    gives \((|N(Z)|+|V^{x'}_{1}|)-(|Z|+|V^{x'}_{0}|)\geq{}1\), and
    so we get that
    \(\frac{1}{2}(|N(Z)|+|V^{x'}_{1}|)-\frac{1}{2}(|Z|+|V^{x'}_{0}|)\geq{}\frac{1}{2}\).
    Substituting this in the equation for \(LP(G')\), we get that
    \(LP(G')\geq{}LP(G)-|N(Z)|+\frac{1}{2}\).
  \end{proof}
  Now we bound the drop in \(MM(G)\).
  \begin{claim}
    \(MM(G')\leq{}MM(G)-|Z|\).
  \end{claim}
  \begin{proof}[Proof of the claim.]
    Consider the bipartite graph \(\hat{G}\) obtained from the
    induced subgraph \(G[Z\cup{}N(Z)]\) of \(G\) by deleting every
    edge which has both endpoints in \(N(Z)\). Observe that since
    \(\surplus{Z}=1\) in \(G\), we get that
    $|N(X)|\geq|X|+1\;\forall{}X\subseteq{}Z$, both in \(G\) and
    in \(\hat{G}\). Hence by Hall's Theorem we get that
    \(\hat{G}\) contains a matching saturating \(Z\), and hence
    that \(MM(\hat{G})=|Z|\). But from the construction of graph
    \(G'\) we get that
    \(MM(G)\geq{}MM(G')+MM(\hat{G})\). Substituting for
    \(MM(\hat{G})\), we get that \(MM(G')\leq{}MM(G)-|Z|\).
  \end{proof}
  
  Putting all these together, we get that
  \(\hat{k}'=k'+MM(G')-2LP(G')\leq{}(k-|N(Z)|)+(MM(G)-|Z|)-2(LP(G)-|N(Z)|+\frac{1}{2})=k+MM(G)-2LP(G)+(|N(Z)|-|Z|-1)=k+MM(G)-2LP(G)=\hat{k}\),
  where the last-but-one equality follows from the fact that
  \(\surplus{Z}=1\). Thus we have that \(\hat{k}'\leq{}\hat{k}\).
\end{proof}

\begin{lemma}\label{lem:rule_three_safe}
\autoref{red:struction} is safe.
\end{lemma}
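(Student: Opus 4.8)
The plan is to follow the template of the proof of \autoref{lem_rule_two_safe}. By the definition of \autoref{red:struction} we have \(k'=k-|Z|\), where \(Z\) is the independent set to which the rule is applied, and \(|N(Z)|=|Z|+1\) since \(\surplus{Z}=1\). Since \autoref{red:struction} applies only when \autoref{red:NT_reduction} does not, all-\(\frac{1}{2}\) is the unique optimum of \(LPVC(G)\), so by \autoref{lem:all-halves_positive_surplus} we have \(\surplus{G}\geq 1\) and \(LP(G)=|V(G)|/2\). Write \(W=N(N(Z))\setminus Z\); this is precisely the neighbourhood of the new vertex \(z\) in \(G'\), and \(W\subseteq V(G)\setminus(Z\cup N(Z))\). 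I will show \(LP(G')=LP(G)-|Z|\) and \(MM(G')\leq MM(G)-|Z|\). Together with \(k'=k-|Z|\) these give \(\hat{k}'=k'+MM(G')-2LP(G')\leq (k-|Z|)+(MM(G)-|Z|)-2(LP(G)-|Z|)=k+MM(G)-2LP(G)=\hat{k}\), which is exactly the cancellation (coming from \(\surplus{Z}=1\)) that makes \autoref{red:edgeinneighbour} safe.

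For the LP value: since \(|V(G')|=|V(G)|-2|Z|\), the all-\(\frac{1}{2}\) assignment is feasible for \(LPVC(G')\) and gives \(LP(G')\leq |V(G')|/2=LP(G)-|Z|\). For the reverse inequality I argue by contradiction: assume a half-integral optimum \(x'\) of \(LPVC(G')\) has \(w(x')<|V(G')|/2\), and lift it to a feasible solution \(x\) of \(LPVC(G)\) with \(w(x)=w(x')+|Z|<LP(G)\), contradicting the minimality of \(LP(G)\). Keep \(x\) equal to \(x'\) on \(V(G)\setminus(Z\cup N(Z))\) (which contains all of \(W\)), and fill in \(Z\cup N(Z)\) according to the value \(x'(z)\in\{0,\frac{1}{2},1\}\) of the identified vertex: when \(x'(z)=1\) put \(N(Z)\) to \(1\) and \(Z\) to \(0\); when \(x'(z)=\frac{1}{2}\) put all of \(Z\cup N(Z)\) to \(\frac{1}{2}\); when \(x'(z)=0\) put \(Z\) to \(1\) and \(N(Z)\) to \(0\). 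In each case the total value placed on \(Z\cup N(Z)\) is exactly \(|Z|+x'(z)\), so \(w(x)=w(x')+|Z|\). Feasibility inside \(Z\cup N(Z)\) is immediate (both \(Z\) and \(N(Z)\) are independent), and the only external edges from \(Z\cup N(Z)\) run from \(N(Z)\) to \(W\); for such an edge \(\{p,w\}\) with \(p\in N(Z),w\in W\) one uses the \(G'\)-constraint on the edge \(\{z,w\}\): when \(x'(z)=\frac{1}{2}\) it forces \(x'(w)\geq\frac{1}{2}\) and when \(x'(z)=0\) it forces \(x'(w)=1\), so \(x(p)+x(w)\geq 1\) in both cases (and \(x(p)=1\) when \(x'(z)=1\)).

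For the matching number: take a maximum matching \(M'\) of \(G'\). Every edge of \(M'\) either lies inside \(V(G)\setminus(Z\cup N(Z))\) --- hence is an edge of \(G\) --- or is the unique edge of \(M'\) incident with \(z\), say \(\{z,w\}\) with \(w\in W\); in the latter case pick \(p\in N(Z)\) adjacent to \(w\) in \(G\) (it exists because \(w\in N(N(Z))\)) and replace \(\{z,w\}\) by \(\{w,p\}\). Either way we obtain a matching \(M''\) of \(G\) of size \(MM(G')\) that saturates no vertex of \(Z\) and at most one vertex of \(N(Z)\). Since \(\surplus{G}\geq 1\), every nonempty \(X\subseteq Z\) has \(|N(X)|\geq |X|+1\) in \(G\); hence, even after deleting at most one vertex of \(N(Z)\), Hall's condition still holds for \(Z\) in the bipartite graph induced by \(Z\) together with the remaining vertices of \(N(Z)\), so that bipartite graph has a matching on \(|Z|\) edges saturating \(Z\). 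This matching is vertex-disjoint from \(M''\), and adjoining it gives a matching of \(G\) of size \(MM(G')+|Z|\); thus \(MM(G')\leq MM(G)-|Z|\).

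I expect the LP lower bound \(LP(G')\geq LP(G)-|Z|\) to be the main obstacle: the naive lift that puts all of \(Z\cup N(Z)\) at \(\frac{1}{2}\) costs \(\frac{1}{2}\) too much in the case \(x'(z)=0\), and the fix --- putting \(Z\) at \(1\) and \(N(Z)\) at \(0\), which is affordable precisely because \(x'(z)=0\) has already forced value \(1\) on every vertex of \(W\) --- is the one place where the argument genuinely uses the structure of \autoref{red:struction}. Everything else is bookkeeping with the surplus-\(1\) identity, exactly as in \autoref{lem_rule_two_safe}.
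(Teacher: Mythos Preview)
Your proof is correct and follows essentially the same approach as the paper: the same three-case lift of a half-integral optimum \(x'\) of \(LPVC(G')\) to a feasible solution of \(LPVC(G)\) (keyed on \(x'(z)\in\{0,\tfrac12,1\}\)) to bound \(LP(G')\) from below, and the same matching lift---replace the edge at \(z\) by an edge at some \(p\in N(Z)\), then add a \(Z\)-saturating matching via Hall---to bound \(MM(G')\). Your presentation is slightly tidier in two places: you note the unified identity that the mass placed on \(Z\cup N(Z)\) is exactly \(|Z|+x'(z)\) in all three cases (so \(w(x)=w(x')+|Z|\) without separate arithmetic), and you additionally record the easy upper bound \(LP(G')\leq LP(G)-|Z|\) from all-\(\tfrac12\); the paper only proves the lower bound, which is all that safety needs.
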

\begin{proof}
  From the definition of the rule we get that \(k'=k-|Z|\). We
  bound the other quantities. Let \(z\) be the vertex in \(G'\)
  which results from identifying the vertices of \(N(Z)\) as
  stipulated by the reduction rule.
  \begin{claim}
    \(LP(G')\geq LP(G)-|Z|\).
  \end{claim}
  \begin{proof}[Proof of the claim.]
    Suppose not. Then we get---from the half-integrality property
    of the relaxed LP for \VC---that
    \(LP(G')\leq{}LP(G)-|Z|-\frac{1}{2}\). We show that this
    inequality leads to a contradiction. We consider three cases
    based on the value of \(x'(z)\), which must be one of
    \(\{0,\frac{1}{2},1\}\). Recall that \(\surplus{Z}=1\) and so
    \(|N(Z)|=|Z|+1\).
    \begin{description}
    \item[Case 1:] \(x'(z)=1\). Consider a function
      \(x'':V\to\{0,\frac{1}{2},1\}\) defined as follows. For
      every vertex \(v\) in $V'\setminus\{z\}$, \(x''\) retains
      the value assigned by \(x'\); that is, \(x''(v)=x'(v)\). For
      every vertex \(v\) in the set \(N(Z)\), set \(x''(v)=1\) and
      for every vertex \(v\) in the set \(Z\), \(x''(v)=0\). It is
      not difficult to check that \(x''\) is a feasible solution
      to the relaxed \VC LP for \(G\).  But now the value of this
      solution,
      \(w(x'')=LP(G')-x'(z)+|N(Z)|=LP(G')-1+(|Z|+1)\leq{}LP(G)-\frac{1}{2}\). Thus
      \(x''\) is a feasible solution for \(G\) whose value is less
      than that of the \emph{optimal} solution, a contradiction.

    \item[Case 2:] \(x'(z)=0\). Now consider the following
      function \(x'':V\to\{0,\frac{1}{2},1\}\). For every vertex
      \(v\) in $V'\setminus\{z\}$, \(x''\) retains the value
      assigned by \(x'\): \(x''(v)=x'(v)\). For every vertex
      \(v\in{}Z\) set \(x''(v)=1\) and for every vertex
      \(v\in{}N(Z)\), set \(x''(v)=0\). It is again not difficult
      to check that \(x''\) is a feasible solution to the relaxed
      \VC LP for \(G\). And the value of this solution, \(w(x'')=
      LP(G')+|Z|\leq{}LP(G)-\frac{1}{2}\), again a contradiction.

    \item[Case 3:] \(x'(z)=\frac{1}{2}\). Consider again a
      function \(x'':V\to\{0,\frac{1}{2},1\}\), defined as
      follows. For every vertex \(v\) in \(V'\setminus\{z\}\),
      \(x''\)---once again---retains the value assigned by \(x'\):
      \(x''(v)=x'(v)\). For every vertex \(v\in(Z\cup{}N(Z))\),
      set \(x''(v)=\frac{1}{2}\). This is once again easily
      verified to be a feasible solution to the relaxed \VC LP for
      \(G\), and its value is
      \(w(x'')=LP(G')-x'(z)+\frac{1}{2}(|Z|+|N(Z)|)=LP(G')-\frac{1}{2}+\frac{1}{2}(|Z|+|Z|+1)\leq{}LP(G)-\frac{1}{2}\),
      once again a contradiction.
    \end{description}
    Thus our contrary assumption leads to a contradiction in all
    possible cases, and this proves the claim.
  \end{proof}
    
  Now we bound the drop in \(MM(G)\).
  \begin{claim}
    \(MM(G')\leq{}MM(G)-|Z|\).
  \end{claim}
  \begin{proof}
    For an arbitrary vertex \(u\in{}N(Z)\), let \(G_{u}\) be the
    bipartite subgraph \(G[Z\cup{}(N(Z)\setminus{}\{u\})]\), which
    is an induced subgraph of \(G\).  Since \(\surplus{Z}=1\) in
    \(G\) and only one vertex from \(N(Z)\) is missing in
    \(G_{u}\), we get that
    \(|N(X)|\geq|X|\;\forall{}X\subseteq{}Z\) holds in
    \(G_{u}\). Hence by Hall's Theorem we get that \(G_{u}\)
    contains a matching saturating \(Z\), and hence that
    \(MM(G_{u})=|Z|\).
      
    Now consider a maximum matching \(MM'\) of \(G'\). Starting
    with \(MM'\) we can construct a matching \(MM''\) of size
    \(|MM'|\) in the original graph \(G\) which saturates at most
    one vertex of \(N(Z)\) and none of \(Z\), as follows: There is
    at most one edge in \(MM'\) which saturates the vertex
    \(z\). If there is \emph{no} edge in \(MM'\) which saturates
    the vertex \(z\), then we set \(MM''=MM'\). It is not
    difficult to see that \(MM''\) saturates no vertex in
    \(Z\cup{}N(Z)\). If there is an edge \(\{z,v\}\in{}MM'\), then
    we pick an arbitrary vertex \(u\in{}N(Z)\) such that
    \(\{u,v\}\) is an edge in \(G\)---such a vertex must exist
    since the edge \(\{z,v\}\) exists in \(G'\). We set
    \(MM''=(MM'\setminus\{\{z,v\}\})\cup\{\{u,v\}\}\). It is not
    difficult to see that \(MM''\) is a matching in \(G\) which
    saturates exactly one vertex---\(u\in{}N(Z)\)---in
    \(Z\cup{}N(Z)\).
      
    If the matching \(MM''\), constructed as above, does not
    saturate any vertex of \(N(Z)\), then we choose \(u\) to be an
    arbitrary vertex of \(N(Z)\). If \(MM''\) does saturate a
    vertex of \(N(Z)\), then we set \(u\) to be that vertex. In
    either case, the union of \(MM''\) and any maximum matching of
    the induced bipartite subgraph \(G_{u}\) is itself a matching
    of \(G\), of size \(MM(G_{u})+MM(G')=|Z|+MM(G')\). It follows
    that \(MM(G)\geq|Z|+MM(G')\), which implies
    \(MM(G')\leq{}MM(G)-|Z|\).
    \end{proof}
    Putting all these together, we get that
    \(\hat{k}'=k'+MM(G')-2LP(G')\leq{}(k-|Z|)+(MM(G)-|Z|)-2(LP(G)-|Z|)=k+MM(G)-2LP(G)=\hat{k}\). Thus
    we have that \(\hat{k}'\leq{}\hat{k}\).
\end{proof}

Observe that if \(\surplus{G}=1\) then at least one of
\autoref{red:edgeinneighbour} and \autoref{red:struction}
necessarily applies. From this and
\autoref{lem:all-halves_positive_surplus} we get the following
useful property of graphs on which none of these rules applies.

\begin{lemma}\label{lem:reduced_graph_surplus_two}
  None of the three reduction rules applies to an instance
  \((G,k)\) of \VC if and only if \(\surplus{G}\geq{}2\).
\end{lemma}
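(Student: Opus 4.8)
The plan is to rewrite each rule's applicability condition in terms of \(\surplus{G}\) via \autoref{lem:all-halves_positive_surplus}. First I would dispense with \autoref{red:NT_reduction}: by \autoref{lem:all-halves_positive_surplus}, all-\(\frac{1}{2}\) is the unique optimum solution to \(LPVC(G)\) precisely when \(\surplus{G}>0\), and since \(\surplus{G}\) is integer-valued (it is a minimum of quantities \(|N(X)|-|X|\) over nonempty independent sets \(X\)), this says that \autoref{red:NT_reduction} fails to apply if and only if \(\surplus{G}\geq1\).

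For the main (forward) direction I would assume that none of the three rules applies and deduce \(\surplus{G}\geq2\). By the previous paragraph we already have \(\surplus{G}\geq1\), so it suffices to exclude \(\surplus{G}=1\). Suppose \(\surplus{G}=1\), and let \(Z\) be a nonempty independent set with \(\surplus{Z}=1\). If \(N(Z)\) is not independent in \(G\), then---using that \autoref{red:NT_reduction} does not apply---the set \(Z\) witnesses the applicability of \autoref{red:edgeinneighbour}, a contradiction. If \(N(Z)\) is independent, then---using that neither \autoref{red:NT_reduction} nor \autoref{red:edgeinneighbour} applies---the set \(Z\) witnesses the applicability of \autoref{red:struction}, again a contradiction. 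Hence \(\surplus{G}\geq2\). This is exactly the observation stated just before the lemma, spelled out.

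The converse is routine: if \(\surplus{G}\geq2\) then \(\surplus{G}>0\), so \autoref{lem:all-halves_positive_surplus} makes all-\(\frac{1}{2}\) the unique optimum and \autoref{red:NT_reduction} does not apply; and no nonempty independent set has surplus \(1\), so the existential conditions of \autoref{red:edgeinneighbour} and \autoref{red:struction} both fail. I do not anticipate a genuine obstacle here---the only points needing care are reading each later rule's precondition correctly (it additionally requires the earlier rules not to apply, which is precisely what the hypothesis ``none applies'' supplies) and using integrality of surplus to promote ``\(>0\)'' to ``\(\geq1\)''.
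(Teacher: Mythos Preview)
Your proposal is correct and matches the paper's approach exactly: the paper does not give a separate proof of this lemma but derives it from the one-sentence observation immediately preceding it (that \(\surplus{G}=1\) forces one of \autoref{red:edgeinneighbour} or \autoref{red:struction} to apply) together with \autoref{lem:all-halves_positive_surplus}. You have simply spelled out that observation and the converse in full, including the integrality step and the careful reading of each rule's precondition.
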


Summarizing the results of this section, we have: 
\begin{lemma}\label{lem:result_of_reduction_rules}
  Given an instance \((G,\hat{k})\) of \VCALP we can, in time
  polynomial in the size of the input, compute an instance
  \((G',\hat{k}')\) such that:
  \begin{enumerate}
  \item The two instances are equivalent: \(G\) has a vertex cover
    of size at most \((2LP(G)-MM(G))+\hat{k}\) if and only if
    \(G'\) has a vertex cover of size at most
    \((2LP(G')-MM(G'))+\hat{k}'\);
  \item \(\hat{k}'\leq\hat{k}\), and \(\surplus{G'}\geq{}2\).
  \end{enumerate}
\end{lemma}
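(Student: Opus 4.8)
The plan is to simply run the three reduction rules to exhaustion and then read off the claimed properties from the per-rule lemmas already proved in this section. First I would describe the procedure precisely: starting from $(G,\hat{k})$, compute $k = (2LP(G)-MM(G))+\hat{k}$ so that $(G,k)$ is the equivalent instance of classical \VC; then repeatedly apply the \emph{first} applicable rule among \autoref{red:NT_reduction}, \autoref{red:edgeinneighbour}, \autoref{red:struction}, each time replacing the current $(G,k)$ by the output $(G',k')$; stop when no rule applies and output $(G',\hat{k}')$, where $\hat{k}' = k' + MM(G') - 2LP(G')$.

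For the running-time bound I would argue that each single rule application runs in polynomial time (\autoref{lem:rules_sound_fast}) and \emph{strictly decreases} $|V(G)|$: \autoref{red:NT_reduction} applies exactly when all-$\frac{1}{2}$ is not the unique optimum of $LPVC(G)$, so $V^{x}_{0}\cup V^{x}_{1}\neq\emptyset$ and these vertices are deleted; \autoref{red:edgeinneighbour} deletes the nonempty set $Z\cup N(Z)$; and \autoref{red:struction} replaces the set $Z\cup N(Z)$ — which is nonempty and has more than one vertex, since $\surplus{Z}=1$ forces $Z\neq\emptyset$ and hence $|N(Z)|=|Z|+1\ge 2$ — by a single new vertex. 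Hence the loop terminates after at most $|V(G)|$ iterations, and the whole procedure runs in time polynomial in the size of the input.

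For correctness: each rule is sound (\autoref{lem:rules_sound_fast}), so $(G,k)$ and $(G',k')$ are equivalent instances of \VC; chaining this over all iterations, the final $(G',k')$ is equivalent to the original $(G,k)$. Since $k = (2LP(G)-MM(G))+\hat{k}$ and, by construction, $k' = (2LP(G')-MM(G'))+\hat{k}'$, this is precisely the assertion that $G$ has a vertex cover of size at most $(2LP(G)-MM(G))+\hat{k}$ if and only if $G'$ has one of size at most $(2LP(G')-MM(G'))+\hat{k}'$, which is item (1). For item (2): each rule is safe (\autoref{lem_rule_one_safe}, \autoref{lem_rule_two_safe}, \autoref{lem:rule_three_safe}), i.e.\ does not increase the quantity $k+MM-2LP$, which is exactly our measure $\hat{k}$; chaining the safety inequality over all iterations yields $\hat{k}'\le\hat{k}$. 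Finally, since no reduction rule applies to the output instance, \autoref{lem:reduced_graph_surplus_two} gives $\surplus{G'}\ge 2$.

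The only genuinely delicate point, I expect, is the bookkeeping behind item (1): one must be careful that the measure update $\hat{k}\gets k+MM(G)-2LP(G)$ is performed after each rule application using the \emph{new} graph's parameters, so that at every stage the current pair $(G,\hat{k})$ and the current pair $(G,k)$ encode the same decision question; every other step is a direct appeal to the already-established per-rule lemmas. I would also remark that $\hat{k}'$ may come out negative, which is harmless here — it merely certifies a \NO instance, and this negativity is exactly the condition tested in the base case of \autoref{algorithm}.
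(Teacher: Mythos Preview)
Your proposal is correct and follows essentially the same approach as the paper: the paper presents this lemma as a direct summary of the preceding per-rule results, and your write-up spells out exactly that summary---exhaustive application of the rules, soundness for item~(1), safety plus \autoref{lem:reduced_graph_surplus_two} for item~(2), and termination via the vertex-count drop (which the paper notes only later, in the proof of \autoref{thm:main}). Your argument is in fact more explicit than the paper's, which gives no proof beyond the phrase ``Summarizing the results of this section.''
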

%%% Local Variables: 
%%% mode: latex
%%% TeX-master: "main"
%%% End: 

\subsection{The Branching Rules}
Let \((G=(V,E),k)\) be the instance of \VC obtained after
exhaustively applying the reduction rules to the input instance,
and let \(\hat{k}=k+MM(G)-2LP(G)\). Then \(\surplus{G}\geq{}2\)% ;
% in particular, each vertex of \(G\) has degree at least \(3\)
. If \(\hat{k}<0\) then the instance \((G,\hat{k})\) of \vcalp is
trivially a \NO instance (See \autoref{sec:overview}), and we
return \NO and stop.  In the remaining case \(\hat{k}\geq0\), and
we apply one of two branching rules to the instance to reduce the
measure \(\hat{k}\). Each of our branching rules takes an instance
\((G,\hat{k})\) of \vcalp, runs in polynomial time, and outputs
either two or three instances
\((G_{1},\hat{k}_{1}),(G_{2},\hat{k}_{2})[,(G_{3},\hat{k}_{3})]\)
of \vcalp. The algorithm then recurses on each of these instances
in turn. We say that a branching rule is \emph{sound} if the
following holds: \((G,\hat{k})\) is a \YES instance of \vcalp if
and only if \emph{at least one} of the (two or three) instances
output by the rule is a YES instance of \vcalp. We now present the
branching rules, prove that they are sound, and show that the
measure \(\hat{k}\) drops by at least \(1\) on each branch of each
rule.

Before starting with the branching, we compute the
\emph{Gallai-Edmonds decomposition} of the graph \(G=(V,E)\). This
can be done in polynomial time~\cite{LovaszPlummerBook2009} and
yields a partition of the vertex set \(V\) into three parts
\(V=O\uplus{}I\uplus{}P\), one or more of which may be
empty---see~\autoref{def:gallai_edmonds_decomposition}. We then
branch on edges in the graph which are carefully chosen with
respect to this partition. \autoref{bra:IP_branch} applies to the
graph \(G\) if and only
if % it is not part of a trivial instance, and
the induced subgraph (\(G[I\cup{}P]\)) contains at least one edge.
 
\begin{branching}\label{bra:IP_branch}
  % If the induced subgraph \(G[I\cup{}P]\) contains at least one
  % edge, then 
  Branch on the two end-points of an edge \(\{u,v\}\) in the
  induced subgraph (\(G[I\cup{}P]\)). % That is: take vertex \(u\)
  % into the vertex cover on one branch, and vertex \(v\) into the
  % vertex cover on the other branch.
  More precisely, the two
  branches % modify the graph, \(k\), and \(\hat{k}\)
  generate two instances as follows:
  \begin{description} 
    \item[Branch 1:] \(G_{1}\gets{}(G\setminus{}\{u\}),k_{1}\gets(k-1),\hat{k_{1}}\gets{}k_{1}+MM(G_{1})-2LP(G_{1})\)
    \item[Branch 2:] \(G_{2}\gets{}(G\setminus{}\{v\}),k_{2}\gets(k-1),\hat{k_{2}}\gets{}k_{2}+MM(G_{2})-2LP(G_{2})\)
  \end{description}
  This rule outputs the two instances \((G_{1},\hat{k}_{1})\) and \((G_{2},\hat{k}_{2})\).
\end{branching}

%%%%%%%%%%%%%%%%%%%%%%%%%%%%%%%%%%%%%%%%%%%%%%%%%%%%%%%%%%%%%%%
% \begin{branching}\label{bra:IO_branch}
%   If \autoref{bra:IP_branch} does \emph{not} apply to graph \(G\)
%   and the set \(I\) is nonempty, then do the following. Pick an
%   edge \(\{i,o\}\) of \(G\) where \(i\in{}I,o\in{}O\) such that
%   vertex $o$ has atleast one neighbour in $O$. Construct the graph
%   \(G'=G\setminus{}\{o\}\) and compute its Gallai-Edmonds
%   decomposition \(O'\uplus{}I'\uplus{}P'\). Pick an edge
%   \(\{u,v\}\) in \(G'[P']\). The three branches of this branching
%   rule modify the graph, \(k\), and \(\hat{k}\) as follows:
%   \begin{description} 
%     \item[Branch 1:] \(G_{1}\gets{}(G\setminus{}\{i\}),k_{1}\gets(k-1),\hat{k_{1}}\gets{}k_{1}+MM(G_{1})-2LP(G_{1})\)
%     \item[Branch 2:] \(G_{2}\gets{}(G'\setminus{}\{u\}),k_{2}\gets(k-2),\hat{k_{2}}\gets{}k_{2}+MM(G_{2})-2LP(G_{2})\)
%     \item[Branch 3:] \(G_{3}\gets{}(G'\setminus{}\{v\}),k_{3}\gets(k-2),\hat{k_{3}}\gets{}k_{3}+MM(G_{3})-2LP(G_{3})\)
%   \end{description}
%   This rule outputs the three instances \((G_{1},\hat{k}_{1})\),
%   \((G_{2},\hat{k}_{2})\), and \((G_{3},\hat{k}_{3})\).
% \end{branching}
% \noindent \autoref{bra:IO_branch} \emph{applies} to graph \(G\) if
% and only if it is not part of a trivial instance,
% \autoref{bra:IP_branch} does \emph{not} apply to \(G\), and the
% set \(I\) in the Gallai-Edmonds partition of \(G\) is nonempty.
%%%%%%%%%%%%%%%%%%%%%%%%%%%%%%%%%%%%%%%%%%%%%%%%%%%%%%%%%%%%%%%

\noindent \autoref{bra:O_branch} applies to graph \(G\) if and
only if % it is not part of a trivial instance, and
\autoref{bra:IP_branch} does not apply to \(G\).
\begin{branching}\label{bra:O_branch}
  % If the induced subgraph \(G[I\cup{P}]\) contains no edge,
  % % \autoref{bra:IP_branch} does not apply to graph \(G\)
  % then do the following.
  Pick a vertex \(u\in{O}\) of \(G\) which
  has at least two neighbours \(v,w\in{O}\)% , and let \(v,w\) be any two
  % neighbours of \(u\) which are in \(O\)
  . Construct the graph \(G'=G\setminus{}\{u\}\) and compute its
  Gallai-Edmonds decomposition \(O'\uplus{}I'\uplus{}P'\). Pick an
  edge \(\{x,y\}\) in \(G'[P']\). The three branches of this
  branching rule % modify the graph, \(k\), and \(\hat{k}\)
  generates three instances as follows:
  \begin{description} 
    \item[Branch 1:] \(G_{1}\gets{}(G\setminus{}\{v,w\}),k_{1}\gets(k-2),\hat{k_{1}}\gets{}k_{1}+MM(G_{1})-2LP(G_{1})\)
    \item[Branch 2:] \(G_{2}\gets{}(G'\setminus{}\{x\}),k_{2}\gets(k-2),\hat{k_{2}}\gets{}k_{2}+MM(G_{2})-2LP(G_{2})\)
    \item[Branch 3:] \(G_{3}\gets{}(G'\setminus{}\{y\}),k_{3}\gets(k-2),\hat{k_{3}}\gets{}k_{3}+MM(G_{3})-2LP(G_{3})\)
  \end{description}
  This rule outputs the three instances \((G_{1},\hat{k}_{1})\),
  \((G_{2},\hat{k}_{2})\), and \((G_{3},\hat{k}_{3})\).
\end{branching}

Note that in the pseudocode of \autoref{algorithm} we have not,
for the sake of brevity, written out the three branches of the
second rule. Instead, we have used a boolean switch (\(reduce\))
as a third argument to the procedure to simulate this behaviour.

\begin{lemma}\label{lem:branching_sound}
  Both branching rules are sound, and each can be applied
  in time polynomial in the size of the input \((G,k)\).
\end{lemma}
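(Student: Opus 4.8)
The plan is to reduce both rules to the textbook soundness argument for vertex-cover branching, after first recording the translation between \vcalp and classical \VC. My first step is to observe that a \vcalp instance $(H,\hat{k}_H)$ emitted by a branching rule, carrying the attached budget $k_H=\hat{k}_H+2LP(H)-MM(H)$, is a \YES instance if and only if $OPT(H)\le k_H$: this is immediate from the formulation of \vcalp, and it stays valid even when $\hat{k}_H<0$ (both sides are then false). Since each branch sets $k_H\in\{k-1,k-2\}$, soundness of the rules will follow once I establish the corresponding identities for $OPT$ under vertex deletion.

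For \autoref{bra:IP_branch} I would use that $\{u,v\}$ is an edge of $G$, so every vertex cover of $G$ contains $u$ or $v$: from a cover of $G$ of size at most $k$ I extract a cover of $G_1=G\setminus u$ or of $G_2=G\setminus v$ of size at most $k-1$, and conversely re-adding the deleted endpoint turns a cover of $G_i$ of size at most $k-1$ into a cover of $G$ of size at most $k$. This gives $OPT(G)\le k$ iff $OPT(G_1)\le k-1$ or $OPT(G_2)\le k-1$, which by the translation above is exactly soundness.

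For \autoref{bra:O_branch} I would first check that the rule is well-defined on the graphs to which it is applied: these are reduced graphs, so $\surplus{G}\ge2$ by \autoref{lem:reduced_graph_surplus_two}, and \autoref{bra:IP_branch} does not apply, so $I\cup P$ is independent; then \autoref{lem:egbipart}(1) supplies the vertex $u\in O$ with two neighbours $v,w\in O$, and \autoref{lem:egbipart}(2) applied with $o:=u$ supplies the edge $\{x,y\}$ in $G'[P']$, where $G'=G\setminus u$. For the forward direction I would take a minimum cover $S$ of $G$ with $|S|\le k$ and split on whether $u\in S$: if $u\notin S$ then $v,w\in S$ (both are neighbours of $u$), so $S\setminus\{v,w\}$ covers $G_1=G\setminus\{v,w\}$ within $k-2$ (Branch~1); if $u\in S$ then $S\setminus\{u\}$ covers $G'$ and hence contains $x$ or $y$, so $S\setminus\{u,x\}$ covers $G_2=G\setminus\{u,x\}$ or $S\setminus\{u,y\}$ covers $G_3=G\setminus\{u,y\}$, each within $k-2$ (Branch~2 or~3). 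For the converse I would re-add the two deleted vertices ($\{v,w\}$, $\{u,x\}$, or $\{u,y\}$ respectively) to a cover of $G_1$, $G_2$, or $G_3$ of size at most $k-2$ to obtain a cover of $G$ of size at most $k$. Together these give $OPT(G)\le k$ iff $OPT(G_i)\le k-2$ for some $i$, i.e.\ soundness.

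The running-time claim is routine: for both rules the Gallai-Edmonds decompositions of $G$ and of $G'$ are computed in polynomial time (\autoref{thm:gallai_edmonds_properties}); the required edge of $G[I\cup P]$, the vertex of $O$ with two $O$-neighbours (any non-isolated vertex of $G[O]$ works, by the proof of \autoref{lem:egbipart}(1)), and the edge of $G'[P']$ are found by inspection; and each emitted $\hat{k}_i$ is obtained from $k_i$ together with $MM(\cdot)$ and $LP(\cdot)$, both polynomial-time computable (matchings via Edmonds' algorithm; half-integral optimal LP solutions as described in \autoref{sec:preliminaries}). I do not expect a genuine obstacle here; the only point that needs care is guaranteeing that the structures branched on actually exist, which is precisely what the hypotheses of \autoref{lem:egbipart} deliver. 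The delicate part of the algorithm's analysis — showing that $\hat{k}$ strictly decreases on every branch — lies outside the scope of this lemma.
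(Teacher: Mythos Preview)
Your proposal is correct and follows essentially the same approach as the paper: both arguments reduce soundness to the standard vertex-cover branching (pick an endpoint of an edge; or pick $u$ versus two of its neighbours, then an endpoint of $\{x,y\}$), and both invoke \autoref{lem:egbipart} to guarantee that the required vertex $u\in O$ with two $O$-neighbours and the edge $\{x,y\}$ in $G'[P']$ exist when \autoref{bra:O_branch} is applied. Your explicit translation between \vcalp and classical \VC and your observation that every non-isolated vertex of $G[O]$ has at least two $O$-neighbours (via factor-criticality) are slightly more detailed than the paper's treatment, but the substance is the same.
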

\begin{proof}
  We first prove that the rules can be applied in polynomial
  time. Recall that we can compute the Gallai-Edmonds
  decomposition of graph \(G\) in polynomial
  time~\cite{LovaszPlummerBook2009}.
  \begin{description}
  \item[\autoref{bra:IP_branch}:] Follows more or less directly
    from the definition of the branching rule. 
    % \item[\autoref{bra:IO_branch}:] It is not difficult to see
    %   that the applicability of this rule can be checked in
    %   polynomial time. If the rule does apply, then we get from
    %   Lemma~\ref{lem:egbipart} and Lemma~\ref{lem:delshift} that
    %   edges \(\{i,o\}\) and \(\{u,v\}\) of the kind required in
    %   the rule will necessarily exist. It is not difficult to see
    %   that given these edges, the rest of the rule can be applied
    %   in polynomial time.
  \item[\autoref{bra:O_branch}:] It is not difficult to see that
    the applicability of this rule can be checked in polynomial
    time. If the rule does apply, then the set \(P\) is empty and
    the set \(I\) is an independent set in \(G\)
    (\autoref{cor:gallai_edmonds_more_properties}).  By
    \autoref{lem:egbipart}, there is at least one vertex in the
    set \(O\) which has at least two neighbours in \(O\). Hence
    vertices \(u,v,w\) of the kind required in the rule
    necessarily exist in \(G\). From \autoref{lem:egbipart} we
    also get that an edge \(\{x,y\}\) of the required kind also
    exists. Given the existence of these, it is not difficult to
    see that the rule can be applied in polynomial time.
  \end{description}
  We now show that the rules are sound. Recall that a branching
  rule is sound if the following holds: any instance
  \((G,\hat{k})\) of \vcalp is a \YES instance if and only if at
  least one of the instances obtained by applying the rule to
  \((G,\hat{k})\) is a \YES instance.
  \begin{description}
  \item[\autoref{bra:IP_branch}:] The soundness of this rule is
    not difficult to see since it is a straightforward branching
    on the two end-points of an edge. Nevertheless, we include a
    full argument for the sake of completeness.  So let
    \((G,\hat{k})\) be a \YES instance of \vcalp, and let
    \((G_{1},\hat{k_{1}}),(G_{2},\hat{k_{2}})\) be the two
    instances obtained by applying \autoref{bra:IP_branch} to
    \((G,\hat{k})\). Since \((G,\hat{k})\) is a \YES instance, the
    graph \(G\) has a vertex cover, say \(S\), of size at most
    \(k=2LP(G)-MM(G)+\hat{k}\). Then
    \((S\cap\{u,v\})\neq{}\emptyset\). Suppose \(u\in{}S\). Then
    \(S_{1}=S\setminus{}\{u\})\) is a vertex cover of the graph
    \(G_{1}=G\setminus{}\{u\}\), of size at most
    \(k_{1}=(k-1)\). It follows that for
    \(\hat{k_{1}}=k_{1}+MM(G_{1})-2LP(G_{1})\),
    \((G_{1},\hat{k_{1}})\) is a \YES instance of \vcalp. A
    symmetric argument gives us the \YES instance
    \((G_{2},\hat{k_{2}})\) of \vcalp for the case \(v\in{}S\).
    
    Conversely, suppose \((G_{1},\hat{k_{1}})\) is a \YES instance
    of \vcalp where
    \(G_{1}=G\setminus{}\{u\},k_{1}=(k-1),\hat{k_{1}}=k_{1}+MM(G_{1})-2LP(G_{1})\). Then
    graph \(G_{1}\) has a vertex cover, say \(S_{1}\), of size at
    most \(k_{1}=(k-1)\). \(S=S_{1}\cup\{u\}\) is then a vertex
    cover of graph \(G\) of size at most \(k\), and so
    \((G,\hat{k})\), where \(\hat{k}=k+MM(G)-2LP(G)\) is a \YES
    instance of \vcalp as well. An essentially identical argument
    works for the case when \((G_{2},\hat{k_{2}})\) is a \YES
    instance of \vcalp.

    % \item[\autoref{bra:IO_branch}:] It is not difficult to see
    %   that the applicability of this rule can be checked in
    %   polynomial time. If the rule does apply, then we get from
    %   Lemma~\ref{lem:egbipart} and Lemma~\ref{lem:delshift} that
    %   edges \(\{i,o\}\) and \(\{u,v\}\) of the kind required in
    %   the rule will necessarily exist. It is not difficult to see
    %   that given these edges, the rest of the rule can be applied
    %   in polynomial time.

  \item[\autoref{bra:O_branch}:] It is not difficult to see that
    this rule is sound, either, since it consists of exhaustive
    branching on a vertex \(u\). We include the arguments for
    completeness. So let \((G,\hat{k})\) be a \YES instance of
    \vcalp, and let
    \((G_{1},\hat{k_{1}}),(G_{2},\hat{k_{2}}),(G_{2},\hat{k_{2}})\)
    be the three instances obtained by applying
    \autoref{bra:O_branch} to \((G,\hat{k})\). Since
    \((G,\hat{k})\) is a \YES instance, the graph \(G\) has a
    vertex cover, say \(S\), of size at most
    \(k=2LP(G)-MM(G)+\hat{k}\). We consider two cases:
    \(u\in{S}\), and \(u\notin{S}\).  First consider the case
    \(u\notin{S}\). Then all the neighbours of \(u\) must be in
    \(S\). In particular, \(\{v,w\}\subseteq{S}\). It follows that
    the set \(S\setminus\{v,w\}\) is a vertex cover of the graph
    \(G_{1}=(G\setminus\{v,w\})\), of size at most
    \(k_{1}=(k-2)\). Hence we get that for
    \(\hat{k_{1}}=k_{1}+MM(G_{1})-2LP(G_{1})\),
    \((G_{1},\hat{k_{1}})\) is a \YES instance of \vcalp.
    
    Now consider the case \(u\in{S}\). Then the set
    \(S'=S\setminus\{u\}\) is a vertex cover of the graph
    \(G'=G\setminus\{u\}\), of size at most \(k-1\). Since
    \(\{x,y\}\) is an edge in the graph \(G'\), we get that
    \((S'\cap\{x,y\})\neq\emptyset\). Suppose \(x\in{}S'\). Then
    \(S_{2}=(S'\setminus\{x\})\) is a vertex cover of the graph
    \(G_{2}=G'\setminus{}\{x\}\), of size at most
    \(k_{2}=(k-2)\). It follows that for
    \(\hat{k_{2}}=k_{2}+MM(G_{2})-2LP(G_{2})\),
    \((G_{2},\hat{k_{2}})\) is a \YES instance of \vcalp. A
    symmetric argument gives us the \YES instance
    \((G_{3},\hat{k_{3}})\) of \vcalp for the case \(y\in{}S\).

    Conversely, suppose \((G_{1},\hat{k_{1}})\) is a \YES instance
    of \vcalp where
    \(G_{1}=G\setminus\{v,w\},k_{1}=(k-2),\hat{k_{1}}=k_{1}+MM(G_{1})-2LP(G_{1})\). Then
    graph \(G_{1}\) has a vertex cover, say \(S_{1}\), of size at
    most \(k_{1}=(k-2)\). \(S=S_{1}\cup\{v,w\}\) is then a vertex
    cover of graph \(G\) of size at most \(k\), and so
    \((G,\hat{k})\) where \(\hat{k}=k+MM(G)-2LP(G)\), is a \YES
    instance of \vcalp as well. Essentially identical arguments
    work for the cases when \((G_{2},\hat{k_{2}})\) or
    \((G_{3},\hat{k_{3}})\) is a \YES instance of \vcalp.
  \end{description}
\end{proof}

Our choice of vertices on which we branch ensures that the measure
drops by at least one on each branch of the algorithm.

\begin{lemma}\label{lem:measure_drop}
  Let \((G,\hat{k})\) be an input given to one of the branching
  rules, and let \((G_{i},\hat{k}_{i})\) be an instance output by
  the rule. Then \(\hat{k}_{i}\leq(\hat{k}-1)\).
\end{lemma}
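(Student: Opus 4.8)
The plan is to collapse the whole statement to a single clean inequality about matching numbers, and then verify that inequality for each output instance separately using the Gallai-Edmonds structure.

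First I would note that, since \((G,\hat{k})\) is an input to a branching rule, \(G\) is reduced, so \(\surplus{G}\ge 2\) by \autoref{lem:reduced_graph_surplus_two}. In both rules every output instance \(G_{i}\) has the form \(G\setminus D\) for a set \(D\) of \(d\in\{1,2\}\) vertices, with \(k_{i}=k-d\) (for \autoref{bra:IP_branch}, \(d=1\); for \autoref{bra:O_branch}, \(d=2\), even in Branch~1 where we delete \(\{v,w\}\)). Since \(d\le 2\le\surplus{G}\), \autoref{lem:surplus_lp_drop} gives \(LP(G_{i})=LP(G)-\tfrac{d}{2}\). A one-line computation then yields \(\hat{k}_{i}=k_{i}+MM(G_{i})-2LP(G_{i})=(k-d)+MM(G_{i})-2\bigl(LP(G)-\tfrac{d}{2}\bigr)=k+MM(G_{i})-2LP(G)\), and hence \(\hat{k}_{i}-\hat{k}=MM(G_{i})-MM(G)\). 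So it suffices to prove, for every output instance \(G_{i}\), that \(MM(G_{i})\le MM(G)-1\).

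Next I would record two elementary consequences of \autoref{def:gallai_edmonds_decomposition} and \autoref{thm:gallai_edmonds_properties}: \textbf{(i)} every vertex of \(I\cup P=V(G)\setminus O\) is saturated by every maximum matching of \(G\), so deleting such a vertex strictly decreases the matching number; and \textbf{(ii)} every vertex of \(O\) is exposed by some maximum matching, so deleting such a vertex leaves the matching number unchanged. For \autoref{bra:IP_branch} this already finishes the job: both endpoints \(u,v\) of the branching edge lie in \(I\cup P\), so \textbf{(i)} gives \(MM(G\setminus\{u\}),MM(G\setminus\{v\})\le MM(G)-1\). For \autoref{bra:O_branch}, the rule applies exactly when \(G[I\cup P]\) is edgeless, so \(I\cup P\) is independent and \(P=\emptyset\) by \autoref{cor:gallai_edmonds_more_properties}, and \autoref{lem:egbipart} supplies a vertex \(u\in O\) with two neighbours \(v,w\in O\), together with an edge \(\{x,y\}\) in \(G'[P']\) where \(G'=G\setminus\{u\}\). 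In Branch~1, the path \(v\!-\!u\!-\!w\) in \(G[O]\) shows \(v,w\) lie in one component of \(G[O]\), so \autoref{cor:gallai_edmonds_more_properties} gives \(MM(G\setminus\{v,w\})\le MM(G)-1\). In Branches~2 and~3, fact \textbf{(ii)} gives \(MM(G')=MM(G)\) since \(u\in O\), and since \(x,y\in P'=V(G')\setminus O'\), applying fact \textbf{(i)} to \(G'\) with its own decomposition yields \(MM(G'\setminus\{x\}),MM(G'\setminus\{y\})\le MM(G')-1=MM(G)-1\). In every branch \(MM(G_{i})\le MM(G)-1\), which with the reduction of the first paragraph gives \(\hat{k}_{i}\le\hat{k}-1\).

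I do not expect a serious obstacle here: the technical weight has already been absorbed into \autoref{lem:egbipart} and \autoref{cor:gallai_edmonds_more_properties}. The only points requiring care are the bookkeeping identity \(\hat{k}_{i}-\hat{k}=MM(G_{i})-MM(G)\) — which relies on each branch deleting exactly \(k-k_{i}\) vertices and on \(\surplus{G}\ge 2\) so that \autoref{lem:surplus_lp_drop} applies with the right \(s\) — and the observation that the vertices \(x,y\) branched on in \autoref{bra:O_branch} stand to \(G'\) exactly as the endpoints of the \(G[I\cup P]\)-edge stand to \(G\) (they lie outside the ``\(O\)'' part of their respective graphs), so the same matching-drop argument applies one recursion level down.
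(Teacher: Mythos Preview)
Your proposal is correct and follows essentially the same approach as the paper: the same invocation of \(\surplus{G}\ge 2\) with \autoref{lem:surplus_lp_drop} to control \(LP\), and the same Gallai--Edmonds facts (saturation of \(I\cup P\), \autoref{cor:gallai_edmonds_more_properties} for Branch~1 of \autoref{bra:O_branch}, and the two-step argument via \(G'\) for Branches~2 and~3) to control \(MM\). Your only departure is presentational---you first collapse every branch to the single inequality \(MM(G_{i})\le MM(G)-1\) via the identity \(\hat{k}_{i}-\hat{k}=MM(G_{i})-MM(G)\), whereas the paper carries out the full arithmetic separately in each branch---which is a tidy improvement but not a different argument.
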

\begin{proof} 
  Recall that by definition, \(\hat{k}=k+MM(G)-2LP(G)\).  We
  consider each branching rule. We reuse the notation from the
  description of each rule.
  \begin{description}
  \item[\autoref{bra:IP_branch}:] Consider \textbf{Branch 1}.
    Since \(u\in(I\cup{P})\), \emph{every} maximum matching of
    graph \(G\) saturates vertex \(u\)
    (\autoref{thm:gallai_edmonds_properties}). Hence we get that
    \(MM(G_{1})=(MM(G)-1)\). Since \(\surplus{G}\geq2\) we
    get---from \autoref{lem:surplus_lp_drop}---that
    \(LP(G_{1})=(LP(G)-\frac{1}{2})\). And since \(k_{1}=(k-1)\)
    by definition, we get that
    \(\hat{k}_{1} = k_{1}+MM(G_{1})-2LP(G_{1}) =
    (k-1)+(MM(G)-1)-2(LP(G)-\frac{1}{2}) =k+MM(G)-2LP(G)-1 =
    (\hat{k}-1)\). An essentially identical argument applied to
    the (symmetrical) \textbf{Branch 2} tells us that
    \(\hat{k}_{2}=(\hat{k}-1)\).
  \item[\autoref{bra:O_branch}:]Consider \textbf{Branch 1}. Since
    \(\{u,v,w\}\subseteq{O}\) and \(v,w\) are neighbours of vertex
    \(u\), we get that vertices \(v\) and \(w\) belong to the same
    connected component of the induced subgraph \(G[O]\) of
    \(G\). It follows
    (\autoref{cor:gallai_edmonds_more_properties}) that
    \(MM(G_{1})\leq(MM(G)-1)\). Since \(\surplus{G}\geq2\) and
    \(G_{1}=G\setminus\{v,w\}\) we get---from
    \autoref{lem:surplus_lp_drop}---that
    \(LP(G_{1})=(LP(G)-1)\). And since \(k_{1}=(k-2)\) by
    definition, we get that
    \(\hat{k}_{1} = k_{1}+MM(G_{1})-2LP(G_{1})\leq
    (k-2)+(MM(G)-1)-2(LP(G)-1) = k+MM(G)-2LP(G)-1 = (\hat{k}-1)\).
    
    Now consider \textbf{Branch 2}. Since \(u\in{O}\) we
    get---from the definition of the Gallai-Edmonds
    decomposition---that \(MM(G')=MM(G)\). Now since \(x\in{P'}\)
    we get---~\autoref{thm:gallai_edmonds_properties}---that
    \(MM(G_{2})=(MM(G')-1)=(MM(G)-1)\). Since \(\surplus{G}\geq2\)
    and \(G_{2}=G\setminus\{u,x\}\) we get---from
    \autoref{lem:surplus_lp_drop}---that
    \(LP(G_{2})=(LP(G)-1)\). And since \(k_{2}=(k-2)\) by
    definition, we get that
    \(\hat{k}_{2} = k_{2}+MM(G_{2})-2LP(G_{2}) =
    (k-2)+(MM(G)-1)-2(LP(G)-1) = k+MM(G)-2LP(G)-1 =
    (\hat{k}-1)\). An essentially identical argument applied to
    the (symmetrical) \textbf{Branch 3} tells us that
    \(\hat{k}_{3}=(\hat{k}-1)\).
  \end{description}
\end{proof}

\subsection{Putting it All Together: Correctness and Running Time Analysis}
The correctness of our algorithm and the claimed bound on its
running time follow more or less directly from the above
discussion. 

\begin{proof}[Proof of \autoref{thm:main}]
  We claim that \autoref{algorithm} solves \VCALP in
  \(\OhStar(3^{\hat{k}})\) time. The correctness of the algorithm
  follows from the fact that both the reduction rules and the
  branching rules are sound---\autoref{lem:rules_sound_fast} and
  \autoref{lem:branching_sound}. This means that (i) no reduction
  rule ever converts a \YES instance into a \NO instance or
  \emph{vice versa}, and (ii) each branching rule outputs \emph{at
    least one} \YES instance when given a \YES instance as output,
  and \emph{all} \NO instances when given a \NO instance as
  input. It follows that if the algorithm outputs \YES or \NO,
  then the input instance must also have been \YES or \NO,
  respectively.

  The running time bound follows from three factors. Firstly, all
  the reduction rules are safe, and so they never increase the
  measure \(\hat{k}\)---see
  \autoref{lem_rule_one_safe},~\autoref{lem_rule_two_safe},
  \autoref{lem:rule_three_safe},
  and~\autoref{lem:result_of_reduction_rules}. Also, each
  reduction rule can be executed in polynomial time, and since
  each reduction rule reduces the number of vertices in the graph
  by at least one, they can be exhaustively applied in polynomial
  time. Secondly, each branch of each branching rule reduces the
  measure by at least \(1\)---\autoref{lem:measure_drop}---from
  which we get that each path from the root of the recursion
  tree---where the measure is the original value of
  \(\hat{k}\)---to a leaf---where the measure first becomes zero
  or less---has length at most \(\hat{k}\). Further, the largest
  branching factor is \(3\), which means that the number of nodes
  in the recursion tree is \(O(3^{\hat{k}})\). Thirdly, we know
  that the computation at each node in the recursion tree takes
  polynomial time. This follows from \autoref{lem:branching_sound}
  for the nodes where we do branching. As for the leaf nodes:
  Since we know that the measure will never be negative for a \YES
  instance (\autoref{lem:lower_bound}), since we can check for the
  applicability of each branching rule in polynomial time, and
  since each branching rule reduces the measure by at least one,
  we can solve the instance at each leaf node in polynomial time.
\end{proof}

%%% Local Variables: 
%%% mode: latex
%%% TeX-master: "main"
%%% End: 

%%% Local Variables: 
%%% mode: latex
%%% TeX-master: "main"
%%% End: 

\section{Conclusion}\label{sec:conclusion}
Motivated by an observation of Lov\'{a}sz and Plummer, we derived
the new lower bound \(2LP(G)-MM(G)\) for the vertex cover number
of a graph \(G\). This bound is at least as large as the bounds
\(MM(G)\) and \(LP(G)\) which have hitherto been used as lower
bounds for investigating above-guarantee parameterizations of \VC.
We took up the parameterization of the \VC problem above our
``higher'' lower bound \(2LP(G)-MM(G)\), which we call the
Lov\'{a}sz-Plummer lower bound for \VC. We showed that \VC remains
fixed-paramter tractable even when parameterized above the
Lov\'{a}sz-Plummer bound. The main result of this work is an
\(\OhStar(3^{\hat{k}})\) algorithm for \VCALP.

The presence of both \((-2LP(G))\) and \(MM(G)\)---in addition to
the ``solution size''---in our measure made it challenging to find
structures on which to branch; we had to be able to control each
of these values and their interplay in order to ensure a drop in
the measure at each branch. The main new idea which we employed
for overcoming this hurdle is the use of the Gallai-Edmonds
decomposition of graphs for finding structures on which to branch
profitably. The main technical effort in this work has been
expended in proving that our choice of vertices/edges from the
Gallai-Edmonds decomposition actually work in the way we
want. Note, however, that the branching rules themselves are very
simple; it is only the analysis which is involved.

The most immediate open problem is whether we can improve on the
base \(3\) of the \FPT running time.  Note that any such
improvement directly implies \FPT algorithms of the same running
time for \AGVC and \VCAL. Tempted by this implication, we have
tried to bring this number down but, so far, in vain. Another
question which suggests itself is: Is this the best lower bound
for vertex cover number above which \VC is \FPT? How far can we
push the lower bound before the problem becomes intractable?

%%% Local Variables: 
%%% mode: latex
%%% TeX-master: "main"
%%% End: 

\clearpage
\pagenumbering{roman} % Restart page numbering at i for the bibliography
\bibliographystyle{plain}
\bibliography{VCAG}
\clearpage

%%% Local Variables: 
%%% mode: latex
%%% TeX-master: "main"
%%% End: 

\end{document}